
\documentclass[fleqn,english,twocolumn, secthm, amsthm]{autart}
\usepackage[T1]{fontenc}
\usepackage[latin9]{inputenc}
\usepackage{booktabs}
\usepackage{amsmath}
\usepackage{amssymb}
\usepackage{graphicx}

\makeatletter

\providecommand{\tabularnewline}{\\}

\usepackage[dcucite]{harvard}
\usepackage{balance}
\usepackage{dsfont}
\newcommand{\mathbbm}[1]{\mathds{#1}}

\setlength{\mathindent}{2pt} 
\usepackage{times} 

\usepackage{url}

\usepackage{url}
\usepackage{verbatim}

\usepackage{color}


 

\makeatother

\usepackage{babel}
\begin{document}
\begin{frontmatter} 
\title{Randomized Transmission Protocols for Protection against  \\ Jamming Attacks in Multi-Agent Consensus\thanksref{footnoteinfo}} \thanks[footnoteinfo]{This work was supported  in part by JST ERATO HASUO Metamathematics for Systems Design Project (No.\ JPMJER1603), by  JST CREST Grant  No.\ JPMJCR15K3, and by JSPS under Grant-in-Aid for Scientific Research Grant No.~18H04020. The material in this paper was  presented at the 56th IEEE Conference on Decision and Control, 2017, Melbourne, Australia.}
\author[nii1]{Ahmet Cetinkaya} \ead{cetinkaya@nii.ac.jp},
\author[titech1]{Kaito Kikuchi} \ead{kikuchi@dsl.mei.titech.ac.jp},
\author[titech1]{Tomohisa Hayakawa} \ead{hayakawa@sc.e.titech.ac.jp},
\author[titech2]{Hideaki Ishii} \ead{ishii@c.titech.ac.jp}
\address[nii1]{Information Systems Architecture Science Research Division, National Institute of Informatics, Tokyo, 101-8430, Japan}
\address[titech1]{Department of Systems and Control Engineering, Tokyo Institute of Technology, Tokyo 152-8552, Japan} 
\address[titech2]{Department of Computer Science, Tokyo Institute of Technology, Yokohama, 226-8502, Japan}
\begin{keyword} Jamming attacks, randomized methods, multi-agent consensus  \end{keyword} 

\begin{abstract} Multi-agent consensus under jamming attacks is investigated.
Specifically, inter-agent communications over a network are assumed
to fail at certain times due to jamming of transmissions by a malicious
attacker. A new stochastic communication protocol is proposed to achieve
finite-time practical consensus between agents. In this protocol,
communication attempt times of agents are randomized and unknown by
the attacker until after the agents make their communication attempts.
Through a probabilistic analysis, we show that the proposed communication
protocol, when combined with a stochastic ternary control law, allows
agents to achieve consensus regardless of the frequency of attacks.
We demonstrate the efficacy of our results by considering two different
strategies of the jamming attacker: a deterministic attack strategy
and a more malicious communication-aware attack strategy. \end{abstract}\end{frontmatter}

\section{Introduction }

Nowadays, control systems heavily utilize information and communication
technologies. Especially, the Internet of Things is becoming widespread
and remote sensing/control operations can now take place over wireless
networks. With these new developments, the risk of cyber attacks against
control systems is also increasing. Communication channels used in
control systems are vulnerable to cyber attacks and ensuring cyber
security of control systems has become a very important challenge
\cite{wholejournal2015new}.

Networked control systems are threatened by different types of cyber
attacks. For instance, on a vulnerable network, measurement and control
data can be altered by a malicious attacker \cite{fawzi2014secure}.
In certain situations, attackers can even inject false data into the
system without being noticed \cite{mo2010false}. These attacks require
the attacker to be knowledgeable about the system dynamics. In the
context of multi-agent systems, the presence of faulty or even malicious
agents not following the given protocols may affect the global behavior
of the overall system. There is a rich history in computer science
on the development of resilient consensus algorithms (e.g., \citeasnoun{lynch1996distributed}, \citeasnoun{azadmanesh2002asynchronous}).
Recently, this problem has gained interest in systems and control
as well \cite{leblanc2013resilient,tseng2015fault,dibaji2016resilient,dibaji2017resilient}. 

On the other hand, attackers who have limited information about the
control system can resort to denial-of-service (DoS) attacks to prevent
communication over networks. For instance, malicious routers in a
network may intentionally drop measurement and control data \cite{awerbuch2008odsbr,mahmoud2014security}.
Moreover, denial-of-service on wireless networks can also happen in
the form of jamming attacks. A jamming attacker can block the transmissions
on a wireless channel by emitting strong interference signals \cite{xu2005feasibility,pelechrinis2011}.
Recently, researchers explored the effect of jamming and other types
of denial-of-service attacks on networked control systems \cite{de_persis_2016,shishehsiam2016,cetinkaya2016tac,feng2017resilient,cetinkayatactoappear2018,ahmetsiam2018}.
Moreover, the effect of jamming on multi-agent consensus has also
been explored \cite{senejohnny_2015,senejohnny2017jamming}. 

One of the main challenges in studying the multi-agent consensus problem
under jamming attacks is that the attacker's actions cannot be known
a priori. To account for the uncertainty in the generation of attacks,
the works \cite{senejohnny_2015,senejohnny2017jamming} characterized
jamming attacks through their average duration and frequency. It is
shown there that multi-agent consensus can be achieved if the duration
and the frequency of attacks satisfy certain conditions. Specifically,
these works consider a self-triggered control framework, where each
agent attempts to communicate with its neighbors and update its local
control input only when a triggering condition is satisfied. For consensus,
it is required that the ratio of the duration of the attacks to the
total time is less than one. This ensures that the jamming does not
span the entire time. Note that under the self-triggering framework,
the communication attempt times for the agents are deterministic.
Thus, an attacker who is knowledgeable on the multi-agent system can
determine those time instants. This allows the attacker to block the
communication by turning on the jamming attack very briefly at those
instants without violating the duration condition. To avoid this issue,
a restriction on the attack frequency becomes necessary. Specifically,
the frequency of the attacks is required to be less than the frequency
of the communication attempts by the agents. 

Motivated by the discussion above, our goal in this paper is to investigate
attack scenarios where the jamming is turned on and off very frequently.
Our main contribution is a new stochastic consensus framework to deal
with those attack scenarios. In our framework, we use the ternary
control laws previously used in \cite{de_persis_ternary_2013,senejohnny_2015,senejohnny2017jamming}.
However, instead of the self-triggering method utilized in those works,
we propose a stochastic communication protocol that can achieve consensus
regardless of the frequency of the attacks. In this protocol, each
agent attempts to communicate with its neighbors at random time instants.
These time instants are hence unknown by the attacker. 

We consider two attack strategies that are restricted by their average
duration but \emph{not }by their frequency. In the first strategy,
the starting time and the duration of the jamming attacks are deterministic
and do not depend on whether the agents try to communicate. On the
other hand, in the second strategy the attacker is aware of the communication
attempts of the agents and can preserve energy by turning off jamming
right after a communication attempt is blocked. We show that in both
strategies, our proposed stochastic communication protocol guarantees
infinitely many successful communications in the long run. Furthermore,
by using a probabilistic analysis, we show that almost-sure finite-time
practical consensus is achieved regardless of attack frequency as
long as the average ratio of attack durations is less than hundred-percent. 

Our approach for analyzing the consensus under jamming differs largely
from those in the literature. In particular, for the deterministic
communication strategy proposed in \cite{senejohnny_2015,senejohnny2017jamming},
bounds on attack frequency can be used for establishing an upper-bound
for the interval between two consecutive successful communication
times of an agent. Here in this paper, such an upper-bound is not
available and there is a positive probability that any finite number
of consecutive communication attempts can be blocked by a jamming
attacker. This difference is due to the fact that we do not consider
a bound for attack frequencies and our communication protocol involves
randomization of transmission times. We also note that although there
are several works that deal with random connectivity issues and randomly
switching graph topologies in multi-agent systems (e.g., \citeasnoun{tahbaz2010consensus}, \citeasnoun{zhang2010consensus}, \citeasnoun{you2013consensus}),
the analysis techniques in this paper are completely different from
those works due to our approach of intentionally randomizing the inter-agent
communication times to mitigate jamming attacks which occur at uncertain
times. 

Our analysis for consensus relies on first establishing that under
randomized transmissions, all agents can communicate with their neighbors
infinitely many times in the long run. This is shown for the deterministic
and the communication-aware attacks using different techniques. In
the case of deterministic attacks, the independence of attacks and
communication attempts plays an important role. Another big role is
played by the uniform distribution of random communication attempt
times. On the other hand, in the case of communication-aware attacks,
the timing of attacks depends on all previous history of the communication
times of agents. In the analysis of this case, we construct a filtration
that represents the progression of the actions of the agents and those
of the attacker. By utilizing this filtration, we show that our protocol
can achieve a positive probability of at least one successful inter-agent
transmission during carefully selected sufficiently long intervals
spanning the time domain. We then utilize the monotone-convergence
theorem for sets to show that even in communication-aware attacks,
each agent can make infinitely many successful communications in the
long run. This result allows us to show that with suitable choice
of control parameters, each agent would be able to select appropriate
control actions and apply them long enough to reach consensus in finite
time.

In this paper, we show that randomization in inter-agent communications
enables agents to reach consensus regardless of the frequency of jamming
attacks. In recent works, randomization in communication has been
exploited in different ways. For instance, randomized gossip algorithms
is used in \citeasnoun{boyd2006randomized} to allow networked operation
under limited computation and communication resources. Furthermore,
the work by \citeasnoun{dibaji2016resilient} introduced randomness
in quantization as well as in communication times to increase resiliency
against malicious nodes in multi-agent systems. Such advantages of
using probabilistic methods have been found in resilient consensus
in computer science and are often referred to as \textquotedbl impossibility
results\textquotedbl{} (e.g., \citeasnoun{lynch1996distributed}).
In addition, random frequency hopping techniques are utilized by \citeasnoun{navda2007using}
and \citeasnoun{popper2010anti} to mitigate jamming in wireless networks. 

The paper is organized as follows. In Section~\ref{sec:Multi-Agent-Consensus-Problem},
we explain the multi-agent consensus problem under jamming attacks.
We propose a stochastic communication protocol and provide conditions
for consensus under jamming attacks in Section~\ref{sec:Stochastic-Communication-Protoco}.
Then we discuss our protocol's efficacy under deterministic and communication-aware
attacks in Section~\ref{sec:Deterministic-Jamming-and}. In Section~\ref{sec:Numerical-Examples},
we present numerical examples to demonstrate our results. Finally,
we conclude the paper in Section~\ref{sec:Conclusion}. 

We note that part of the results in Sections~\ref{sec:Stochastic-Communication-Protoco}
and \ref{sec:Deterministic-Jamming-and} appeared in our preliminary
report \cite{kikuchicdc2017} without proofs. In this paper, we provide
complete proofs and more detailed discussions in Sections~\ref{sec:Stochastic-Communication-Protoco}
and \ref{sec:Deterministic-Jamming-and}. Furthermore, new numerical
examples are presented in Section~\ref{sec:Numerical-Examples}. 

The notation used in the paper is fairly standard. Specifically, we
denote positive and nonnegative integers by $\mathbb{N}$ and $\mathbb{N}_{0}$,
respectively. Furthermore, we use $(\cdot)^{\mathrm{T}}$ to denote
transpose, $|S|$ to denote the Lebesgue measure of a set $S\subset\mathbb{R}$,
and $A\setminus B$ to denote the set of elements that belong to set
$A$, but not to set $B$. The notations $\mathrm{\mathbb{P}}[\cdot]$
and $\mathrm{\mathbb{E}}[\cdot]$ respectively denote the probability
and the expectation on a probability space $(\Omega,\mathcal{F},\mathbb{P})$.
Moreover, we use $\mathbbm{1}[E]:\Omega\to\{0,1\}$ for the indicator
of the event $E\in\mathcal{F}$, that is, $\mathbbm{1}[E](\omega)=1$,
$\omega\in E$, and $\mathbbm{1}[E](\omega)=0$, $\omega\notin E$.
To simplify the presentation, we omit the $\omega\in\Omega$ in the
notation of random variables in certain equations. 

\section{Multi-Agent Consensus Under Jamming Attacks \label{sec:Multi-Agent-Consensus-Problem}}

In this paper, we investigate the consensus problem for a multi-agent
system composed of $n$ agents with scalar dynamics. The communication
topology of the multi-agent system is represented by an undirected
connected graph $\mathcal{G}=(\mathcal{V},\mathcal{E})$, where $\mathcal{V}=\{1,\ldots,n\}$
represents the set of nodes corresponding to the $n$ agents, and
$\mathcal{E}\subset\mathcal{V}\times\mathcal{V}$ is the set of edges
corresponding to the communication links between the agents. Let $\mathcal{N}_{i}$
be the set of neighbors and $d^{i}$ be the degree of node $i$. We
use $L\in\mathbb{R}^{n\times n}$ to denote the Laplacian matrix associated
with $\mathcal{G}$. Note that $L$ is a symmetric matrix since $\mathcal{G}$
is an undirected graph. 

The evolution of the states of the multi-agent system is characterized
through the scalar dynamics
\begin{align}
\dot{x}^{i}(t) & =u^{i}(t),\quad t\geq0,\label{eq:scalar-dynamics}
\end{align}
 where $x^{i}(t)$ and $u^{i}(t)$ respectively denote the state and
the local control input for agent $i$. 

In this paper we design a piecewise-constant control input $u^{i}(t)$
for each agent $i$, as well as a protocol for the communication between
the agents so that the agents achieve practical consensus, that is,
$x(t)\triangleq[\begin{array}{cccc}
x^{1}(t) & x^{2}(t) & \cdots & x^{n}(t)\end{array}]^{\mathrm{T}}$ representing the agent states converges in finite time to a vector
$x^{*}\in\mathbb{R}^{n}$ belonging to the approximate consensus set
$\mathcal{D}_{\varepsilon}$ with $\varepsilon>0$ given by 
\begin{equation}
\mathcal{D}_{\varepsilon}\triangleq\Bigl\{ x\in\mathbb{R}^{n}:\Big|\sum_{j\in\mathcal{N}_{i}}(x^{j}-x^{i})\Big|<\varepsilon,\hspace{5pt}i\in\mathcal{V}\Bigr\}.\label{eq:finalset}
\end{equation}

In what follows we first discuss the jamming attacks on the communication
channels between the agents, and then we explain our proposed control
and communication protocols for achieving consensus under jamming. 

\subsection{Jamming Attacks \label{subsec:Jamming-Attacks}}

We consider scenarios where the communication channels between the
agents are disabled by a jamming attacker. Specifically, we assume
that when there is a jamming attack, communication on all links $\mathcal{E}$
fail. This setup allows us to model attacks on a shared network that
the agents use for communication. We note that the results presented
in the paper can also be extended with small modifications to more
general cases where the communication links are attacked individually
by multiple attackers. 

We follow the approach of \citeasnoun{de_persis_2016}, \citeasnoun{senejohnny_2015},
\citeasnoun{senejohnny2017jamming} and use the sequences $\{a_{k}\geq0\}_{k\in\mathbb{N}_{0}}$
and $\{\tau_{k}\geq0\}_{k\in\mathbb{N}_{0}}$ to characterize the
starting time instants and durations of the sequence of attacks, respectively.
Specifically, $a_{k}$ represents the starting time instant for the
$k$th attack, and $\tau_{k}$ represents its duration. For each $k\in\mathbb{N}_{0}$,
these scalars are assumed to satisfy $a_{k+1}>a_{k}+\tau_{k}$ (see
the time sequence at the bottom of Figure~\ref{figure:attack_pattern}). 

\begin{figure}
\begin{centering}
\includegraphics[width=0.8\columnwidth]{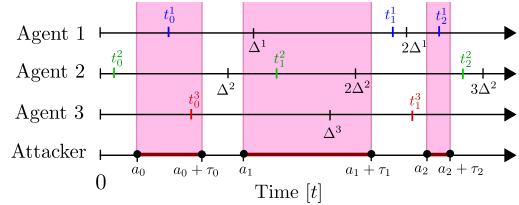} \vskip -9pt
\par\end{centering}
\caption{Communication attempt times of $3$ agents represented with $\{t_{k}^{i}\}_{k\in\mathbb{N}_{0}}$,
$i\in\{1,2,3\}$, together with the jamming attacks represented with
$\{a_{k}\}_{k\in\mathbb{N}_{0}}$ and $\{\tau_{k}\}_{k\in\mathbb{N}_{0}}$. }
\label{figure:attack_pattern} \vskip -2pt
\end{figure}

Now, let $\mathcal{A}_{k}\triangleq[a_{k},a_{k}+\tau_{k}]$ denote
the $k$th attack interval (or instant if $\tau_{k}=0$), during which
the attacker prevents all transmissions on the communication channels
between the agents. For any time interval $[\tau,t]\subset[0,\infty)$,
we use $\overline{\mathcal{A}}(\tau,t)$ to denote the set of times
under jamming attacks, that is, 
\begin{equation}
\overline{\mathcal{A}}(\tau,t)\triangleq\cup_{k\in\mathbb{N}_{0}}\mathcal{A}_{k}\cap[\tau,t].\label{eq:Xi-def}
\end{equation}
 Furthermore, for the same interval $[\tau,t]$, $\overline{\mathcal{A}}^{\mathrm{c}}(\tau,t)$
denotes the complement of $\overline{\mathcal{A}}(\tau,t)$ in the
sense that it is the set of times where there is no attack, that is,
\begin{equation}
\overline{\mathcal{A}}^{\mathrm{c}}(\tau,t)\triangleq[\tau,t]\setminus\overline{\mathcal{A}}(\tau,t).\label{eq:Theta-def}
\end{equation}

Conducting jamming attacks requires energy for transmitting interfering
radio signals \cite{xu2005feasibility}. Thus, an attacker with limited
resources would not be able to continuously jam the communication
channels for a long time. In such cases, the attacker may repeat cycles
of jamming and idling to preserve energy. The following assumption
from \citeasnoun{senejohnny_2015} provides a characterization of
the duration of jamming for various attack scenarios. 

\begin{assum} \label{Assumption:duration_sync} There exist $\kappa\geq0$
and $\rho\in(0,1)$ such that for each $\tau\geq0$ and $t\geq\tau$,
\begin{equation}
|\overline{\mathcal{A}}(\tau,t)|\leq\kappa+\rho(t-\tau),\label{eq:attack-duration-condition}
\end{equation}
 where $|\overline{\mathcal{A}}(\tau,t)|$ represents the total duration
of the attacks in the interval $[\tau,t]$. \end{assum}

Notice that \eqref{eq:attack-duration-condition} implies $\limsup_{t\to\infty}|\overline{\mathcal{A}}(0,t)|/t\leq\rho$.
As a consequence, the scalar $\rho\in(0,1)$ can be considered as
an upper-bound on the ratio of the duration of attacks in long intervals,
and it is related to the average energy used by the attacker. Under
Assumption~\ref{Assumption:duration_sync}, jamming attacks are allowed
to start at arbitrary time instants as long as \eqref{eq:attack-duration-condition}
holds. Note also that the longest duration for continuous jamming
allowed under Assumption~\ref{Assumption:duration_sync} can be obtained
as $\kappa/(1-\rho)$. Here, the scalar $\kappa$ can be selected
to model the attacker's capabilities for continuous jamming. 

\subsection{Stochastic Ternary Control}

To achieve consensus we employ the ternary control approach previously
used in \citeasnoun{de_persis_ternary_2013}, \citeasnoun{senejohnny_2015},
\citeasnoun{senejohnny2017jamming}. However, instead of the self-triggering
method utilized in those studies, we propose a stochastic communication
protocol. In what follows, we first explain the control framework.
We then discuss our communication protocol in detail in Section~\ref{sec:Stochastic-Communication-Protoco}. 

Each agent $i\in\mathcal{V}$ attempts communicating with its neighbors
$\mathcal{N}_{i}$ at times $t_{k}^{i}\geq0$, $k\in\mathbb{N}_{0}$.
In particular, at a communication attempt time $t_{k}^{i}$, agent
$i$ sends an information request to all of its neighbors and asks
for their states. If there is no jamming at time $t_{k}^{i}$, then
the neighbors of agent $i$ receive the request and send back their
states, which will be used in the update of agent $i$'s local control
input. In the case where there is a jamming attack at time $t_{k}^{i}$,
agent $i$ cannot send/receive information. 

We use $\varphi_{k}^{i}\in\{0,1\}$ to indicate whether the communication
attempt at time $t_{k}^{i}$ is successful or not. In particular,
$\varphi_{k}^{i}=0$ indicates a failure at time $t_{k}^{i}$ due
to a jamming attack, and $\varphi_{k}^{i}=1$ implies that agent $i$
successfully communicates with its neighbors at time $t_{k}^{i}$.
Observe that 
\begin{align}
\varphi_{k}^{i} & =\mathbbm{1}[t_{k}^{i}\in\overline{\mathcal{A}}^{\mathrm{c}}(0,t_{k}^{i})],\quad k\in\mathbb{N}_{0}.\label{eq:phi-i-k-def}
\end{align}
In this paper communication attempt times $t_{k}^{i}$ are random
variables, and consequently, $\varphi_{k}^{i}$ are also random variables. 

Now, let ${\rm ave}^{i}(t)\triangleq\sum_{j\in\mathcal{N}_{i}}(x^{j}(t)-x^{i}(t))$
and ${\rm sign}_{\varepsilon}$: ${\mathbb{R}\to\{-1,0,1\}}$ be defined
by 
\begin{align*}
{\rm sign}_{\varepsilon}(z) & \triangleq\begin{cases}
{\rm sign}(z), & {\rm if}\hspace{5pt}|z|\geq\varepsilon,\\
0, & {\rm otherwise,}
\end{cases}
\end{align*}
with $\varepsilon>0$ given in \eqref{eq:finalset}. 

In our framework, agent $i$ attempts to communicate with its neighbors
at time $t_{k}^{i}$ to compute ${\rm sign}_{\varepsilon}({\rm ave}^{i}(t_{k}^{i}))$.
Notice that ${\rm sign}_{\varepsilon}({\rm ave}^{i}(t_{k}^{i}))$
is a ternary indicator of the location of neighboring agents' states
with respect to agent $i$'s own state. Let 
\begin{align}
\tilde{u}_{k}^{i} & \triangleq\begin{cases}
{\rm sign}_{\varepsilon}({\rm ave}^{i}(t_{k}^{i})),\quad & \mathrm{if}\,\,\varphi_{k}^{i}=1,\\
0,\quad & \mathrm{otherwise},
\end{cases}\quad k\in\mathbb{N}_{0}.\label{eq:utilde-def}
\end{align}
Notice that $\tilde{u}_{k}^{i}=\varphi_{k}^{i}{\rm sign}_{\varepsilon}({\rm ave}^{i}(t_{k}^{i}))$.
Hence, $\tilde{u}_{k}^{i}\neq0$ implies $\varphi_{k}^{i}=1$ (i.e.,
agent $i$ successfully communicates with its neighbors at time $t_{k}^{i}$)
and ${\rm sign}_{\varepsilon}({\rm ave}^{i}(t_{k}^{i}))\in\{-1,1\}$
(i.e., $|{\rm ave}^{i}(t_{k}^{i})|\geq\varepsilon$). Now, let $\hat{u}_{k}^{i}\triangleq\tilde{u}_{k}^{i}$
for $k=0$ and
\begin{align}
\hat{u}_{k}^{i} & \triangleq\begin{cases}
\tilde{u}_{k}^{i}, & \mathrm{if}\,\,\tilde{u}_{k}^{i}\neq0,\\
\tilde{u}_{k-1}^{i}, & \mathrm{\mathrm{if}\,\,}\tilde{u}_{k}^{i}=0\,\,\mathrm{and}\,\,t_{k}^{i}<t_{k-1}^{i}+T^{i},\\
0,\quad & \mathrm{\mathrm{if}\,\,}\tilde{u}_{k}^{i}=0\,\,\mathrm{and}\,\,t_{k}^{i}\geq t_{k-1}^{i}+T^{i},
\end{cases}\label{eq:uhat-def}
\end{align}
 for $k\in\mathbb{N}$, where $T^{i}>0$. The scalar $\hat{u}_{k}^{i}$
denotes the control input to be applied after time $t_{k}^{i}$. 

To understand how the control input value is decided, we consider
a few cases. First, if the communication attempt at time $t_{k}^{i}$
is successful, then $\tilde{u}_{k}^{i}={\rm sign}_{\varepsilon}({\rm ave}^{i}(t_{k}^{i}))$,
otherwise $\tilde{u}_{k}^{i}=0$. If $\tilde{u}_{k}^{i}\neq0$ (or
equivalently if $\tilde{u}_{k}^{i}\in\{-1,1\}$), then the control
input value is set as $\hat{u}_{k}^{i}=\tilde{u}_{k}^{i}$. On the
other hand, if $\tilde{u}_{k}^{i}=0$ and the times $t_{k}^{i}$ and
$t_{k-1}^{i}$ are sufficiently close so that $t_{k}^{i}<t_{k-1}^{i}+T^{i}$,
then $\tilde{u}_{k-1}^{i}$ is applied as a control input (i.e., $\hat{u}_{k}^{i}=\tilde{u}_{k-1}^{i}$).
Otherwise, if $\tilde{u}_{k}^{i}=0$ and $t_{k}^{i}\geq t_{k-1}^{i}+T^{i}$,
then the control input is set to $\hat{u}_{k}^{i}=0$. 

The control input $\hat{u}_{k}^{i}$ is applied during the interval
$[t_{k}^{i},\hat{t}_{k}^{i})$, where $\hat{t}_{k}^{i}\triangleq\min\{t_{k+1}^{i},\tilde{t}_{k}^{i}\}$
with $\tilde{t}_{k}^{i}\triangleq t_{k}^{i}+T^{i}$ for $k=0$, and
\begin{align}
\tilde{t}_{k}^{i} & \triangleq\begin{cases}
t_{k}^{i}+T^{i}, & \mathrm{if}\,\,\tilde{u}_{k}^{i}\neq0,\\
t_{k-1}^{i}+T^{i}, & \mathrm{if}\,\,\tilde{u}_{k}^{i}=0\,\,\mathrm{and}\,\,t_{k}^{i}<t_{k-1}^{i}+T^{i},\\
t_{k}^{i}+T^{i},\quad & \mathrm{if}\,\,\tilde{u}_{k}^{i}=0\,\,\mathrm{and}\,\,t_{k}^{i}\geq t_{k-1}^{i}+T^{i},
\end{cases}\label{eq:ttilde-def}
\end{align}
 for $k\in\mathbb{N}$. Thus, the continuous-time control input $u^{i}(t)$
for each agent $i$ is given by $u^{i}(t)=0$ for $t\in[0,t_{0}^{i})$,
and 
\begin{align}
u^{i}(t) & \triangleq\begin{cases}
\hat{u}_{k}^{i},\quad & t\in[t_{k}^{i},\hat{t}_{k}^{i}),\\
0,\quad & t\in[\hat{t}_{k}^{i},t_{k+1}^{i}),
\end{cases}\quad k\in\mathbb{N}_{0}.\label{eq:control-input}
\end{align}
 As we will discuss in the following sections, the scalar $T^{i}>0$
and the communication attempt time instants $t_{0}^{i},t_{1}^{i},\ldots$
in our framework are selected to satisfy $t_{k}^{i}+T^{i}<t_{k+2}^{i}$
for every $k\in\mathbb{N}_{0}$. Notice that at time $t_{k}^{i}$,
the control input is set to either one of the values ${\rm sign}_{\varepsilon}({\rm ave}^{i}(t_{k}^{i}))$,
${\rm sign}_{\varepsilon}({\rm ave}^{i}(t_{k-1}^{i}))$, or $0$,
depending on the status of the communication attempts at $t_{k}^{i}$
and $t_{k-1}^{i}$. Moreover, this control input is applied until
time $\hat{t}_{k}^{i}=\min\{t_{k+1}^{i},\tilde{t}_{k}^{i}\}$. In
cases where $\tilde{t}_{k}^{i}<t_{k+1}^{i}$, the control input is
set to zero at $\tilde{t}_{k}^{i}$ until the next communication attempt
time $t_{k+1}^{i}$. Furthermore, in the cases where $t_{k+1}^{i}<\tilde{t}_{k}^{i}$,
agent $i$ first attempts a communication with its neighbors at time
$t_{k+1}^{i}$, and then updates the control input based on \eqref{eq:control-input}
with $k$ replaced by $k+1$. 

In our approach, communication attempt time instants are random variables
with certain distributions and their values are selected by the agents
using random number generators. In particular, for each agent $i$,
the first communication attempt time $t_{0}^{i}$ can be selected
at time $0$. Moreover, for each $k$, the communication attempt time
$t_{k+1}^{i}$ can be selected at time $t_{k}$.  Clearly, agent $i$
can check if $t=t_{k+1}^{i}$ at each time $t>t_{k}^{i}$. The control
input can thus be updated following the description in \eqref{eq:control-input}.
Notice that in certain cases, agent $i$ keeps the same control input
that is set at time $t_{k}^{i}$ even after time $t_{k+1}^{i}$. This
happens if the previous successful communication time $t_{k}^{i}$
is sufficiently close to $t_{k+1}^{i}$ and $\tilde{u}_{k+1}^{i}=0$
(i.e., $\varphi_{k+1}^{i}{\rm sign}_{\varepsilon}({\rm ave}^{i}(t_{k+1}^{i}))=0$).
In such cases, we have $u(t)=\hat{u}_{k+1}^{i}=\tilde{u}_{k}^{i}={\rm sign}_{\varepsilon}({\rm ave}^{i}(t_{k}^{i}))$
for $t\in[t_{k+1}^{i},\hat{t}_{k+1}^{i})$ with $\hat{t}_{k+1}^{i}=t_{k}^{i}+T^{i}$,
which indicates that the control input ${\rm sign}_{\varepsilon}({\rm ave}^{i}(t_{k}^{i}))$
is used until $t_{k}^{i}+T^{i}$ (which is after $t_{k+1}^{i}$). 

\begin{figure}
\begin{centering}
\includegraphics[width=0.9\columnwidth]{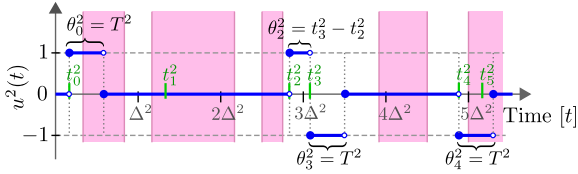} \vskip -9pt
\par\end{centering}
\caption{Control input trajectory of agent $2$ versus time. Jamming attack
intervals are denoted with shaded (pink) regions. }
\label{figure:control_input} \vskip -4pt
\end{figure}

\begin{exmp} In Figure~\ref{figure:control_input}, we show an example
trajectory for the control input $u^{2}(t)$ of agent $2$. In this
example, we have $\varphi_{0}^{2}=\varphi_{2}^{2}=\varphi_{3}^{2}=\varphi_{4}^{2}=1$
indicating that communication attempts at times $t_{0}^{2}$, $t_{2}^{2}$,
$t_{3}^{2}$, $t_{4}^{2}$ are not attacked and thus successful. For
this example scenario, at the successful communication time $t_{2}^{2}$,
agent $2$ sets its control input to $1$ based on the neighbor agents'
states. This control input is then applied until time $t_{3}^{2}$,
at which another successful communication is made and the control
input value is changed to $-1$. The new control input value is kept
constant until time $t_{3}^{2}+T^{2}$ and set to zero at that time. 

In our control approach, if agent $i$ makes a successful communication
at time $t_{k}^{i}$ and sets a nonzero control input, then this input
is kept constant at most until time $t_{k}^{i}+T^{i}$. If no new
successful communication is made until that time, the control input
is switched to zero at time $t_{k}^{i}+T^{i}$. If a successful communication
is made before (i.e., $t_{k+1}^{i}<t_{k}^{i}+T^{i}$) and $|{\rm ave}^{i}(t_{k+1}^{i})|\geq\varepsilon$,
then the control input value is set at time $t_{k+1}^{i}$ to $\tilde{u}_{k+1}^{i}={\rm sign}_{\varepsilon}({\rm ave}^{i}(t_{k+1}^{i}))$.
In the example scenario depicted in Figure~\ref{figure:control_input},
the nonzero control inputs that are set at instants $t_{0}^{2}$,
$t_{3}^{2}$, and $t_{4}^{2}$ are applied for $T^{2}$ units of time.
The control input is switched to zero at times $t_{0}^{2}+T^{2}$,
$t_{3}^{2}+T^{2}$, and $t_{4}^{2}+T^{2}$, since no new successful
communication attempts yielding new nonzero control inputs are made
during the intervals $[t_{0}^{2},t_{0}^{2}+T^{2})$, $[t_{3}^{2},t_{3}^{2}+T^{2})$,
and $[t_{4}^{2},t_{4}^{2}+T^{2})$.  \end{exmp}

To achieve consensus under the control law \eqref{eq:control-input},
it is important to design the times $t_{k}^{i}$, $k\in\mathbb{N}_{0}$,
$i\in\mathcal{V}$, at which the agents attempt to communicate with
each other.  In this paper, we take a stochastic approach and design
these times in Section~\ref{sec:Stochastic-Communication-Protoco}
as random variables. 

\begin{rem} In \citeasnoun{senejohnny_2015}, the communication attempt
times $t_{k}^{i}$, $k\in\mathbb{N}_{0}$, $i\in\mathcal{V}$, are
determined based on a \emph{deterministic} self-triggering approach.
There, the minimum interval between consecutive communication attempts
for each agent is given by $\Delta^{*}>0$. It is observed in \cite{senejohnny_2015}
that if the attacker is allowed to attack at a frequency larger than
the maximum frequency of communication attempts (given by $\frac{1}{\Delta^{*}}$),
then all communication may be blocked even if Assumption~\ref{Assumption:duration_sync}
is satisfied. This is because as $t_{k}^{i}$ are deterministic times,
an attacker that is knowledgeable on how $t_{k}^{i}$ are decided
may be able to generate a strategy to pinpoint $t_{k}^{i}$ with attacks
of very short durations and preserve energy for the rest of the time.
To avoid this problem, \citeasnoun{senejohnny_2015} considers the
additional assumption that there exist $\eta\geq0$ and $\sigma<\frac{1}{\Delta^{*}}$
such that 
\begin{equation}
\mathcal{I}(\tau,t)\leq\eta+\sigma(t-\tau),\label{eq:freq-condition}
\end{equation}
for all $\tau\geq0$ and $t\geq\tau$, where $\mathcal{I}(\tau,t)\in\mathbb{N}_{0}$
denotes the number of attack intervals $\mathcal{A}_{n}$ in the time
frame $[\tau,t]$. The scalar $\sigma>0$ in \eqref{eq:freq-condition}
represents an upper-bound on the attack frequency in the long run.
Note that since $\sigma<\frac{1}{\Delta^{*}}$, the assumption \eqref{eq:freq-condition}
guarantees that the attack frequency in large time frames is smaller
than the frequency of communication attempts. By utilizing $\rho$
from \eqref{eq:attack-duration-condition} and $\sigma$ from \eqref{eq:freq-condition},
the main result in \cite{senejohnny_2015} shows that consensus is
achieved if 
\begin{align}
\rho+\sigma\Delta^{*} & <1.\label{eq:condition-for-duration-and-freq}
\end{align}

In the following section, we propose a stochastic communication protocol,
where communication attempt times $t_{k}^{i}$, $k\in\mathbb{N}_{0}$,
$i\in\mathcal{V}$, are decided randomly. We show that in this case
even if \eqref{eq:freq-condition} and \eqref{eq:condition-for-duration-and-freq}
are not satisfied due to high frequency attacks, consensus can still
be achieved. \end{rem}

\section{Stochastic Communication Protocol for Consensus Under Jamming Attacks
\label{sec:Stochastic-Communication-Protoco}}

\subsection{Stochastic Communication Protocol}

We propose a communication protocol where each agent attempts to communicate
with its neighbors at random times that are unknown to the attacker
until the agents attempt communication at those times. 

\begin{defn}[Stochastic communication protocol] \label{Def:Communication}
For each agent $i\in\mathcal{V}$, let $\Delta^{i}>0$ be a fixed
scalar, and set $t_{k}^{i}$, $k\in\mathbb{N}_{0}$, to be independent
random variables such that $t_{k}^{i}$ has uniform distribution on
the interval $[k\Delta^{i},(k+1)\Delta^{i})$. \end{defn}

In this communication protocol, each agent $i$ attempts to make transmission
to its neighbors once in every $\Delta^{i}$ period. The communication
attempt time $t_{k}^{i}$ for the interval $[k\Delta^{i},(k+1)\Delta^{i})$
is selected randomly at time $k\Delta^{i}$ by agent $i$. Due to
uniform distribution of $t_{k}^{i}$, $k\in\mathbb{N}_{0}$, we have
$\mathbb{E}[t_{k+1}^{i}-t_{k}^{i}]=\Delta^{i}$, that is, the duration
between consecutive communication attempts are $\Delta^{i}$ \emph{in
expectation}. However, note that $t_{k+1}^{i}-t_{k}^{i}\neq\Delta^{i}$,
almost surely. We remark that the attacker is allowed to know how
$t_{k}^{i}$ is distributed, but not the value of $t_{k}^{i}$ until
the communication is attempted. In the example of Figure~\ref{figure:attack_pattern}
with $3$ agents, the attacker is able to block the first communication
attempts of agents $1$ and $3$. However, the first attempt of agent
$2$ is successful. Thus, for this example, $\varphi_{0}^{1}=\varphi_{0}^{3}=0$
and $\varphi_{0}^{2}=1$. 

\subsection{Finite-Time Consensus Analysis}

In the ternary control approach, the speed of change in each agent's
state is at most $1$. Therefore, agents need to apply the ternary
control input to their dynamics for sufficiently long durations to
get closer to their neighbors. Notice that the duration of control
input application is affected by the number of successful communication
attempts. In particular, it is likely that an agent $i$ would apply
control input to its dynamics for a longer duration in total, if that
agent makes many successful communications with its neighbors. 

As we characterize in the control law \eqref{eq:utilde-def}--\eqref{eq:control-input},
for each $k\in\mathbb{N}_{0}$ with $\varphi_{k}^{i}=1$ and $|{\rm ave}^{i}(t_{k}^{i})|\geq\varepsilon$,
the control input for agent $i$ is set to a nonzero value at time
$t_{k}^{i}$. As a preliminary discussion before our consensus analysis
in Theorem~\ref{Theorem:consensus} below, we first investigate the
duration for which this nonzero control input is applied to agent
$i$'s dynamics. To this end we define $\theta_{k}^{i}$, $k\in\mathbb{N}_{0}$,
by 
\begin{align}
\theta_{k}^{i} & \triangleq\begin{cases}
t_{k+1}^{i}-t_{k}^{i}, & \mathrm{if}\,\,t_{k+1}^{i}-t_{k}^{i}<T^{i}\,\,\mathrm{and\,}\,\tilde{u}_{k+1}^{i}\neq0,\\
T^{i}, & \mathrm{otherwise}.
\end{cases}\label{eq:thetadef}
\end{align}
It follows from \eqref{eq:utilde-def}--\eqref{eq:control-input}
that for each $k\in\mathbb{N}_{0}$ with $\varphi_{k}^{i}=1$ and
$|{\rm ave}^{i}(t_{k}^{i})|\geq\varepsilon$, the control input $\hat{u}_{k}^{i}=\tilde{u}_{k}^{i}={\rm sign}_{\varepsilon}({\rm ave}^{i}(t_{k}^{i}))$
is applied for $\theta_{k}^{i}$ units of time after $t_{k}^{i}$.
The applied control input is either $1$ or $-1$ depending on the
states of agent $i$ and its neighbors at time $t_{k}^{i}$. Notice
that the selected control input does not change during the interval
$[t_{k}^{i},t_{k}^{i}+\theta_{k}^{i})$. We also note that $\theta_{k}^{i}$
is used for consensus analysis and its knowledge is not required for
control purposes. 

The scalar $\theta_{k}^{i}$ is always upper-bounded by the constant
$T^{i}>0$ in \eqref{eq:thetadef}. In fact $\theta_{k}^{i}$ can
either be $t_{k+1}^{i}-t_{k}^{i}$ or $T^{i}$, and the first case
where $\theta_{k}^{i}=t_{k+1}^{i}-t_{k}^{i}$ happens only if the
next communication attempt time $t_{k+1}^{i}$ is close to $t_{k}^{i}$
so that $t_{k+1}^{i}-t_{k}^{i}<T^{i}$. To explain this case, we first
consider a successful communication time instant $t_{k}^{i}$ with
$|{\rm ave}^{i}(t_{k}^{i})|\geq\varepsilon$. Notice that $\varphi_{k}^{i}=1$
and ${\rm sign}_{\varepsilon}({\rm ave}^{i}(t_{k}^{i}))\neq0$, and
therefore, the control input is set to a nonzero value at $t_{k}^{i}$
(e.g., $t_{2}^{2}$ in Figure~\ref{figure:control_input}). The next
communication is attempted at time $t_{k+1}^{i}$. If the communication
attempt at $t_{k+1}^{i}$ is successful (i.e., $\varphi_{k+1}^{i}=1$)
and the neighboring agents' states are sufficiently far from agent
$i$'s state such that $|{\rm ave}^{i}(t_{k+1}^{i})|\geq\varepsilon$,
then $\tilde{u}_{k+1}^{i}\neq0$, and thus the control input is updated
to $\tilde{u}_{k+1}^{i}$ at time $t_{k+1}^{i}$ (e.g., $t_{3}^{2}$
in Figure~\ref{figure:control_input}). If the communication attempt
at time $t_{k+1}^{i}$ fails (i.e., $\varphi_{k+1}^{i}=0$) or $|{\rm ave}^{i}(t_{k+1}^{i})|<\varepsilon$,
then $\theta_{k}^{i}=T^{i}$ and the control input is unchanged until
time $t_{k}^{i}+T^{i}$. In the case where $t_{k+1}^{i}-t_{k}^{i}<T^{i}$,
if $\varphi_{k+1}^{i}=0$ or $|{\rm ave}^{i}(t_{k+1}^{i})|<\varepsilon$,
then we have $\hat{u}_{k+1}=\tilde{u}_{k}={\rm sign}_{\varepsilon}({\rm ave}^{i}(t_{k}^{i}))$
by \eqref{eq:utilde-def} and \eqref{eq:uhat-def}, and moreover,
it follows from \eqref{eq:ttilde-def} that $\hat{t}_{k+1}^{i}=\min\{t_{k+2}^{i},\tilde{t}_{k+1}^{i}\}$
with $\tilde{t}_{k+1}^{i}=t_{k}^{i}+T^{i}$. 

By using properties of $\theta_{k}^{i}$, we obtain the following
key result, which establishes intervals in which the control input
is guaranteed to be nonzero. 

\begin{lem}\label{KeyLemma} Suppose $T^{i}\in(0,\Delta^{i})$. Then,
for every $k\in\mathbb{N}_{0}$ such that $\varphi_{k}^{i}=1$ and
$|{\rm ave}^{i}(t_{k}^{i})|\geq\varepsilon$, we have 
\begin{align}
|u^{i}(t)| & =1,\quad t\in[t_{k}^{i},t_{k}^{i}+T^{i}).\label{eq:ueq1}
\end{align}
 \end{lem}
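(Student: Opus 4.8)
The plan is to unwind the piecewise control definitions \eqref{eq:utilde-def}--\eqref{eq:control-input} using the quantities $\theta_k^i$ from \eqref{eq:thetadef}, and to show that at most one recursion step is needed, the hypothesis $T^i<\Delta^i$ being exactly what makes the recursion bottom out. First I would record the base step, which is already spelled out just before the lemma: since $\varphi_k^i=1$ and $|{\rm ave}^i(t_k^i)|\geq\varepsilon$, we have $\tilde u_k^i=\varphi_k^i{\rm sign}_\varepsilon({\rm ave}^i(t_k^i))={\rm sign}_\varepsilon({\rm ave}^i(t_k^i))\in\{-1,1\}$, so $\tilde u_k^i\neq0$; hence the first branches of \eqref{eq:uhat-def} and \eqref{eq:ttilde-def} apply, and by \eqref{eq:control-input} the value $\hat u_k^i=\tilde u_k^i$ with $|\hat u_k^i|=1$ is applied on $[t_k^i,t_k^i+\theta_k^i)$. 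If $\theta_k^i=T^i$, this already gives \eqref{eq:ueq1}, so the only case to treat is $\theta_k^i=t_{k+1}^i-t_k^i<T^i$ with $\tilde u_{k+1}^i\neq0$.

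In that remaining case, $\tilde u_{k+1}^i\neq 0$ means precisely $\varphi_{k+1}^i=1$ and $|{\rm ave}^i(t_{k+1}^i)|\geq\varepsilon$, so the base step applies verbatim with $k$ replaced by $k+1$: the value $\tilde u_{k+1}^i$, of absolute value $1$, is applied on $[t_{k+1}^i,t_{k+1}^i+\theta_{k+1}^i)$. Thus $|u^i(t)|=1$ on $[t_k^i,t_{k+1}^i)\cup[t_{k+1}^i,t_{k+1}^i+\theta_{k+1}^i)$, and it remains only to verify that this union contains $[t_k^i,t_k^i+T^i)$, i.e.\ that $t_{k+1}^i+\theta_{k+1}^i\geq t_k^i+T^i$. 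If $\theta_{k+1}^i=T^i$ this is immediate from $t_{k+1}^i>t_k^i$; if instead $\theta_{k+1}^i=t_{k+2}^i-t_{k+1}^i$, it reduces to showing $t_{k+2}^i\geq t_k^i+T^i$.

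This last inequality is where $T^i\in(0,\Delta^i)$ enters: by Definition~\ref{Def:Communication}, $t_k^i<(k+1)\Delta^i$ and $t_{k+2}^i\geq(k+2)\Delta^i$, hence $t_k^i+T^i<(k+1)\Delta^i+\Delta^i=(k+2)\Delta^i\leq t_{k+2}^i$, as needed. The hard part is really just the bookkeeping of which branch of \eqref{eq:uhat-def}--\eqref{eq:ttilde-def} is active together with confirming that no third communication attempt $t_{k+2}^i$ ever falls inside $[t_k^i,t_k^i+T^i)$; this is the single use of $T^i<\Delta^i$ (as opposed to merely $T^i>0$), and it is what caps the case analysis at depth two. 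Once this is in place, the three cases $\theta_k^i=T^i$, $\bigl(\theta_k^i<T^i,\ \theta_{k+1}^i=T^i\bigr)$, and $\bigl(\theta_k^i<T^i,\ \theta_{k+1}^i=t_{k+2}^i-t_{k+1}^i\bigr)$ together cover $[t_k^i,t_k^i+T^i)$ with intervals on which $|u^i(t)|=1$, proving \eqref{eq:ueq1}.
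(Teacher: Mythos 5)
Your proposal is correct and follows essentially the same route as the paper's proof: identify the nonzero-input interval $[t_k^i,t_k^i+\theta_k^i)$, dispose of the case $\theta_k^i=T^i$, and in the remaining case extend coverage through $[t_{k+1}^i,t_{k+1}^i+\theta_{k+1}^i)$ using $T^i<\Delta^i$ together with $t_{k+2}^i-t_k^i\geq\Delta^i$ (which you derive directly from the support intervals of the protocol, where the paper packages the same fact into a single $\min$-chain). The case split on $\theta_{k+1}^i$ versus the paper's one-line $\min$ estimate is a purely cosmetic difference.
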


\begin{proof}By \eqref{eq:utilde-def}--\eqref{eq:control-input}
and \eqref{eq:thetadef}, we have 
\begin{align}
|u^{i}(t)| & =1,\quad t\in[t_{k}^{i},t_{k}^{i}+\theta_{k}^{i}).\label{eq:first1}
\end{align}
Notice that the value of $\theta_{k}^{i}$ can either be $T^{i}$
or $t_{k+1}^{i}-t_{k}^{i}$. If $\theta_{k}^{i}=T^{i}$, then \eqref{eq:first1}
implies \eqref{eq:ueq1}. 

On the other hand, if $\theta_{k}^{i}=t_{k+1}^{i}-t_{k}^{i}$, then
we have 
\begin{align}
|u^{i}(t)|=1,\quad & t\in[t_{k}^{i},t_{k+1}^{i}).\label{eq:second1}
\end{align}
Moreover, \eqref{eq:thetadef} indicates that the case $\theta_{k}^{i}=t_{k+1}^{i}-t_{k}^{i}$
happens when $t_{k+1}^{i}-t_{k}^{i}\leq T^{i}$ and $\tilde{u}_{k+1}^{i}\neq0$.
In this case, since $\tilde{u}_{k+1}^{i}\neq0$, it follows from \eqref{eq:uhat-def}--\eqref{eq:control-input}
and \eqref{eq:thetadef}  that
\begin{align}
|u^{i}(t)| & =1,\quad t\in[t_{k+1}^{i},t_{k+1}^{i}+\theta_{k+1}^{i}).\label{eq:second2}
\end{align}
Now, as a consequence of \eqref{eq:second1} and \eqref{eq:second2},
we get 
\begin{align}
|u^{i}(t)| & =1,\quad t\in[t_{k}^{i},t_{k+1}^{i}+\theta_{k+1}^{i}).\label{eq:resultbefore}
\end{align}
Furthermore, since $t_{k+1}^{i}\geq t_{k}^{i}$ and $t_{k+2}^{i}-t_{k}^{i}\geq\Delta^{i}$,
\begin{align*}
 & t_{k+1}^{i}+\theta_{k+1}^{i}\geq t_{k+1}^{i}+\min\{t_{k+2}^{i}-t_{k+1}^{i},T^{i}\}\\
 & \quad=\min\{t_{k+2}^{i},t_{k+1}^{i}+T^{i}\}\geq\min\{t_{k}^{i}+\Delta^{i},t_{k}^{i}+T^{i}\}.
\end{align*}
Using this inequality together with $T^{i}<\Delta^{i}$, we obtain
$t_{k+1}^{i}+\theta_{k+1}^{i}\geq t_{k}^{i}+T^{i}$. Hence, \eqref{eq:ueq1}
follows from \eqref{eq:resultbefore}. \end{proof}

We are ready to present the first main result of this paper. The theorem
below provides conditions under which the multi-agent system \eqref{eq:scalar-dynamics},
\eqref{eq:control-input} achieves consensus. In particular, they
are given in terms of $T^{i}$ of the control law, as well as $\Delta^{i}$
and $\varphi_{k}^{i}$, associated with the communication protocol. 

\begin{thm}\label{Theorem:consensus} Consider the multi-agent system
\eqref{eq:scalar-dynamics}, \eqref{eq:control-input} with $T^{i}\in(0,\min\{\frac{\varepsilon}{2d^{i}},\Delta^{i}\})$
where $\varepsilon>0$. Assume that under the stochastic communication
protocol, it holds
\begin{align}
\mathbb{P}\Bigl[\sum_{k=0}^{\infty}\varphi_{k}^{i}\geq M\Bigr]=1,\label{eq:phi_condition}
\end{align}
 for all $M\in\mathbb{N}_{0}$ and $i\in\mathcal{V}$. Then $x(t)$
converges in finite time to a vector $x^{*}\in\mathbb{R}^{n}$ belonging
to the set $\mathcal{D}_{\varepsilon}$ given by \eqref{eq:finalset},
almost surely. \end{thm}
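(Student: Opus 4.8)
The plan is to use a Lyapunov-type argument on the disagreement measure, combined with the hypothesis \eqref{eq:phi_condition} that guarantees infinitely many successful communications almost surely. Consider the quantity $V(t) \triangleq \max_{i\in\mathcal{V}} x^i(t) - \min_{i\in\mathcal{V}} x^i(t)$, or alternatively work directly with the quantities ${\rm ave}^i(t)$. The key structural observation is that once an agent $i$ makes a successful communication at a time $t_k^i$ with $|{\rm ave}^i(t_k^i)|\geq\varepsilon$, Lemma~\ref{KeyLemma} tells us that $|u^i(t)|=1$ for the entire interval $[t_k^i, t_k^i+T^i)$, and moreover the control input has the correct sign ${\rm sign}({\rm ave}^i(t_k^i))$, so agent $i$ moves monotonically toward the average of its neighbors for a guaranteed duration $T^i$. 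The restriction $T^i < \frac{\varepsilon}{2d^i}$ ensures that during this interval no agent can overshoot: the change in ${\rm ave}^i$ is bounded by $2d^i T^i < \varepsilon$ (since each of the at most $d^i$ neighbor terms, plus agent $i$'s own term, moves at speed at most $1$, so ${\rm ave}^i$ changes by at most $2 d^i T^i$ over a window of length $T^i$), which means the sign of ${\rm ave}^i$ cannot flip and, more importantly, the ordering/extremal structure of the agents is not destroyed.

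First I would establish a monotonicity / boundedness lemma for the extremal states: show that $\max_i x^i(t)$ is nonincreasing and $\min_i x^i(t)$ is nondecreasing along trajectories of \eqref{eq:scalar-dynamics}, \eqref{eq:control-input}, using the fact that whenever agent $i$ has the maximum value its control input (if nonzero) must be $-1$ or $0$, never $+1$ — here the overshoot bound from $T^i < \frac{\varepsilon}{2d^i}$ is what prevents the "stale" control input $\tilde{u}_{k-1}^i$ from pushing an agent past an extreme. This gives that $V(t)$ is nonincreasing and hence converges (pathwise) to some limit $V_\infty \geq 0$. Next I would argue by contradiction: suppose on a set of positive probability the trajectory never enters $\mathcal{D}_\varepsilon$, i.e. there is always some agent $i$ with $|{\rm ave}^i(t)|\geq\varepsilon$. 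On such a sample path, pick the agent(s) realizing the extremal value; there must be some agent $i^*$ adjacent to the "interface" for which ${\rm ave}^{i^*}$ is bounded away from $\pm\varepsilon$ in a way that forces a strict decrease of $V$ of at least some fixed amount $\delta > 0$ each time $i^*$ gets a successful communication and applies its control for the full duration $T^{i^*}$.

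The crucial use of \eqref{eq:phi_condition} comes here: since $\sum_k \varphi_k^i = \infty$ almost surely for every $i$, almost every sample path has infinitely many successful communication times for $i^*$, hence infinitely many intervals of length $T^{i^*}$ on which a strict decrease of $V$ by at least $\delta$ occurs — contradicting convergence of $V(t)$ to a finite limit. This shows $x(t)$ enters $\mathcal{D}_\varepsilon$ in finite time almost surely. For the "converges to a vector $x^*$" part: once $x(t)\in\mathcal{D}_\varepsilon$, every agent has $|{\rm ave}^i|<\varepsilon$, so ${\rm sign}_\varepsilon({\rm ave}^i(t_k^i)) = 0$ at every subsequent successful communication, hence $\tilde u_k^i = 0$ eventually, and by the structure of \eqref{eq:uhat-def} with the $t_k^i \geq t_{k-1}^i + T^i$ branch (which triggers because $\Delta^i$ spacing eventually separates consecutive successful times beyond $T^i$ — using $T^i < \Delta^i$), the control input becomes and stays $0$, freezing the state at some $x^* \in \mathcal{D}_\varepsilon$.

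The main obstacle I anticipate is making the "strict decrease by $\delta$" step rigorous: I need to identify, on a non-consensus sample path, a \emph{specific} agent whose successful communications are guaranteed to shrink $V$ by a uniform amount, and show this agent accumulates infinitely many such events. The natural candidate is an agent attaining the current maximum (or minimum) that has a strictly lower (resp. higher) neighbor — connectedness of $\mathcal{G}$ guarantees such an agent exists whenever $V_\infty > 0$ — but the extremal agent may change over time, so the argument must be phrased carefully, perhaps by fixing a threshold and tracking the measure of time during which the "above-threshold" set of agents is nonempty and shrinking, rather than tracking a single agent. Handling the interplay between the randomness (which agent's communication succeeds when) and this deterministic Lyapunov decrease — while only invoking the qualitative hypothesis \eqref{eq:phi_condition} rather than any quantitative rate — is the delicate part of the proof.
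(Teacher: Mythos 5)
Your overall strategy (a Lyapunov argument driven by the hypothesis \eqref{eq:phi_condition}) is the right family of ideas, but the specific choice of Lyapunov function creates a gap that your proposal does not close, and it is the gap you yourself flag at the end. The diameter $V(t)=\max_i x^i(t)-\min_i x^i(t)$ is insensitive to the activity of non-extremal agents: the trajectory can fail to be in $\mathcal{D}_\varepsilon$ because some \emph{interior} agent $i$ has $|\mathrm{ave}^i(t)|\geq\varepsilon$, and when that agent applies its control for $T^i$ units of time the diameter need not decrease at all. So the contradiction ``$V$ must drop by $\delta$ infinitely often'' does not cover all non-consensus sample paths, and the plan to ``pick the agent at the interface'' cannot be repaired just by careful bookkeeping of which agent is extremal --- the obstruction is that the chosen functional simply does not charge interior activity. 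A second, related gap: \eqref{eq:phi_condition} gives each agent infinitely many successful communications, but says nothing about whether those successes occur while $|\mathrm{ave}^i|\geq\varepsilon$; between a time at which $|\mathrm{ave}^{i}|\geq\varepsilon$ and agent $i$'s next successful communication, neighbors may have moved so that the threshold is no longer met, and then agent $i$ contributes no decrease at that success. Your proposal does not say how to extract a uniform $\delta$-decrease from this situation.

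The paper's proof resolves both issues by working with $V(x)=\tfrac12 x^{\mathrm T}Lx$, whose derivative along trajectories is $-\sum_{i\in\mathcal{W}(t)}|\mathrm{ave}^i(t)|\leq-\alpha\sum_i g^i(t)$ with $\alpha=\min_i(\varepsilon-2d^iT^i)>0$ (this is where $T^i<\varepsilon/(2d^i)$ is used, to show the stale sign is still correct). Integrating gives a \emph{finite total budget} $\int_0^\infty\sum_i g^i(s)\,\mathrm{d}s\leq V(x(0))/\alpha$ for nonzero-control time summed over \emph{all} agents, extremal or not. Then \eqref{eq:phi_condition} is used only to define a sequence of checkpoints $\vartheta_{k+1}=\max_i\underline{t}^i(\vartheta_k)+2\overline{T}$ (the first successful communication of every agent after $\vartheta_k$, padded), and a case analysis shows that whenever some agent violates the $\varepsilon$-condition at $\vartheta_k$, at least $\underline{T}$ units of the budget are consumed in $[\vartheta_k-\overline{T},\vartheta_{k+1}-\overline{T})$ --- the case where the violating agent's own next success finds $|\mathrm{ave}^{i^*}|<\varepsilon$ is handled by charging either $i^*$'s recent activity or a neighbor's. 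Finiteness of the budget then forces all but finitely many checkpoints (and, by a second pass, all but finitely many whole intervals) to be in consensus. If you want to salvage your route, you would need to replace the diameter with a functional that decreases at a uniform rate whenever \emph{any} agent applies nonzero control; at that point you have essentially rediscovered the Laplacian quadratic form.
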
 

\begin{proof} First, for each $i\in\mathcal{V}$, let $\mathcal{K}^{i}\triangleq\{k\in\mathbb{N}_{0}\colon\varphi_{k}^{i}=1,|\mathrm{ave}^{i}(t_{k}^{i})|\geq\varepsilon\}$
and $\mathcal{T}^{i}\triangleq\bigcup_{k\in\mathcal{K}^{i}}[t_{k}^{i},t_{k}^{i}+\theta_{k}^{i})$.
Notice that for a given $k\in\mathcal{K}^{i}$, the time $t_{k}^{i}$
corresponds to a successful communication attempt time of agent $i$
for which $|\mathrm{ave}^{i}(t_{k}^{i})|\geq\varepsilon$. Thus, $\mathcal{T}^{i}$
represents the union of all nonzero control input application intervals
after successful communication attempts of agent $i$. 

Now, for $t\in\mathcal{T}^{i}$, let $\pi^{i}(t)\triangleq\max\{t_{k}^{i}\colon t_{k}^{i}\leq t,k\in\mathcal{K}^{i}\}$.
Here, $\pi^{i}(t)$ corresponds to the agent $i$'s last successful
communication attempt time $t_{k}^{i}$ before time $t$ such that
$|\mathrm{ave}^{i}(t_{k}^{i})|\geq\varepsilon$. Furthermore, let
\begin{align}
\mathcal{W}(t) & \triangleq\{i\in\mathcal{V}\colon t\in\mathcal{T}^{i}\},\quad t\geq0.\label{eq:Wdef}
\end{align}
Observe that $\mathcal{W}(t)$ corresponds to the set of agents that
have a nonzero control input $u_{i}(t)$ at time $t$. 

Now, let $V(x)\triangleq(1/2)x^{{\rm T}}Lx$, $x\in\mathbb{R}^{n}$.
By \eqref{eq:scalar-dynamics} and \eqref{eq:control-input}, 
\begin{align}
\dot{V}(x(t)) & =x^{\mathrm{T}}(t)Lu(t)\nonumber \\
 & =-\sum_{i\in\mathcal{W}(t)}\mathrm{ave}^{i}(t)\mathrm{sign}_{\varepsilon}(\mathrm{ave}^{i}(\pi^{i}(t))),\label{eq:lyapunov-derivative}
\end{align}
for $t\geq0$, where $u(t)\triangleq[\begin{array}{cccc}
u^{1}(t) & u^{2}(t) & \cdots & u^{n}(t)\end{array}]^{\mathrm{T}}$.

Our next goal is to show $\mathrm{ave}^{i}(t)\mathrm{sign}_{\varepsilon}(\mathrm{ave}^{i}(\pi^{i}(t)))=|\mathrm{ave}^{i}(t)|$,
$i\in\mathcal{W}(t)$. To this end, we need to show that $\mathrm{sign}_{\varepsilon}(\mathrm{ave}^{i}(\pi^{i}(t)))=\mathrm{sign}(\mathrm{ave}^{i}(t))$,
$i\in\mathcal{W}(t)$. To show this equivalence, we utilize an important
property of ternary control law. Specifically, in the interval $[\pi^{i}(t),t]$,
the state of an agent $j\in\mathcal{V}$ can change by at most $t-\pi^{i}$,
that is, $x^{j}(t)\in[x^{j}(\pi^{i}(t))-(t-\pi^{i}),x^{j}(\pi^{i}(t))+(t-\pi^{i})]$.
By using this property and noting that agent $i$ has $d^{i}$ number
of neighbors, we obtain that if $\mathrm{ave}^{i}(\pi^{i}(t))\geq\varepsilon$,
then 
\begin{align}
 & \sum_{j\in\mathcal{N}_{i}}(x^{j}(t)-x^{i}(t))\nonumber \\
 & \quad\geq\sum_{j\in\mathcal{N}_{i}}\big(x^{j}(\pi^{i}(t))-(t-\pi^{i})-(x^{i}(\pi^{i}(t))+(t-\pi^{i}))\big)\nonumber \\
 & \quad=\mathrm{ave}^{i}(\pi^{i}(t))-2d^{i}(t-\pi^{i}(t))\geq\varepsilon-2d^{i}(t-\pi^{i}(t)).\label{eq:res1}
\end{align}
Since $t\in\mathcal{T}^{i}$ for $i\in\mathcal{W}(t)$ and $\theta_{k}^{i}\leq T^{i}$
for $k\in\mathbb{N}_{0}$, we have $t-\pi^{i}(t)<T^{i}$. Furthermore,
since $T^{i}<\varepsilon/2d^{i}$, we have $t-\pi^{i}(t)<\varepsilon/2d^{i}$.
Hence, it follows from \eqref{eq:res1} that if $\mathrm{ave}^{i}(\pi^{i}(t))\geq\varepsilon$,
then 
\begin{align}
\sum_{j\in\mathcal{N}_{i}}(x^{j}(t)-x^{i}(t)) & >\varepsilon-2d^{i}\frac{\varepsilon}{2d^{i}}=0.\label{eq:res2}
\end{align}
 Similarly, we can show that if $\mathrm{ave}^{i}(\pi^{i}(t))\leq-\varepsilon$,
then
\begin{align}
\sum_{j\in\mathcal{N}_{i}}(x^{j}(t)-x^{i}(t)) & <-(\varepsilon-2d^{i}\frac{\varepsilon}{2d^{i}})=0.\label{eq:res3}
\end{align}
 Noting that $|\mathrm{ave}^{i}(\pi^{i}(t))|\geq\varepsilon$ for
$i\in\mathcal{W}(t)$, we obtain from \eqref{eq:res2} and \eqref{eq:res3}
that $\mathrm{sign}_{\varepsilon}(\mathrm{ave}^{i}(\pi^{i}(t)))=\mathrm{sign}(\mathrm{ave}^{i}(t))$,
$i\in\mathcal{W}(t)$. Consequently, we have $\mathrm{ave}^{i}(t)\mathrm{sign}_{\varepsilon}(\mathrm{ave}^{i}(\pi^{i}(t)))=|\mathrm{ave}^{i}(t)|$,
$i\in\mathcal{W}(t)$. 

It then follows from \eqref{eq:lyapunov-derivative} that 
\begin{align*}
\dot{V}(x(t)) & =-\sum_{i\in\mathcal{W}(t)}|\mathrm{ave}^{i}(t)|,\quad t\geq0.
\end{align*}
Since $|\mathrm{ave}^{i}(t)|\geq\varepsilon-2d^{i}(t-\pi^{i}(t))\geq\varepsilon-2d^{i}T^{i}$,
$i\in\mathcal{W}(t)$, we have 
\begin{align}
\dot{V}(x(t)) & \leq-\sum_{i\in\mathcal{W}(t)}(\varepsilon-2d^{i}T^{i})\leq-\alpha\sum_{i\in\mathcal{W}(t)}1\nonumber \\
 & =-\alpha\sum_{i\in\mathcal{V}}g^{i}(t),\quad t\geq0,\label{eq:lyapunov-derivative-final-form}
\end{align}
 where $\alpha\triangleq\min_{i\in\mathcal{V}}(\varepsilon-2d^{i}T^{i})$
and $g^{i}(t)\triangleq\mathbbm{1}\big[t\in\mathcal{T}^{i}\big]$.
By integrating both sides of \eqref{eq:lyapunov-derivative-final-form},
\begin{align}
V(x(t)) & \leq V(x(0))-\alpha\int_{0}^{t}\sum_{i\in\mathcal{V}}g^{i}(s)\mathrm{d}s,\quad t\geq0.\label{eq:integral-ineq}
\end{align}
Since $V(x(t))\geq0$, it follows from \eqref{eq:integral-ineq} that
\begin{align}
\int_{0}^{t}\sum_{i\in\mathcal{V}}g^{i}(s)\mathrm{d}s & \leq\frac{V(x(0))}{\alpha},\quad t\geq0.\label{eq:epsilon-integral}
\end{align}

Now, let $H^{i}\triangleq\{\omega\in\Omega\colon\sum_{k=0}^{\infty}\varphi_{k}^{i}=\infty\}$
and $H\triangleq\cap_{i\in\mathcal{V}}H^{i}$. By \eqref{eq:phi_condition},
we have $\mathbb{P}[H^{i}]=1$, $i\in\mathcal{V}$, and thus, $\mathbb{P}[H]=1$.
In what follows, we show that finite time approximate consensus is
achieved for every $\omega\in H$. Define
\begin{align*}
\underline{k}^{i}(t) & \triangleq\inf\{k\in\mathbb{N}_{0}\colon t_{k}^{i}>t,\,\varphi_{k}^{i}=1\},
\end{align*}
 and $\underline{t}^{i}(t)\triangleq t_{\underline{k}^{i}(t)}$, $i\in\mathcal{V}$.
Here, $\underline{t}^{i}(t)$ denotes the first successful communication
time instant of agent $i$ after time $t$. Notice that for every
$\omega\in H$, we have $\underline{t}^{i}(t)<\infty$, $t\geq0$. 

Let $\underline{T}\triangleq\min_{i\in\mathcal{V}}T^{i}$ and $\overline{T}\triangleq\max_{i\in\mathcal{V}}T^{i}$.
Moreover, let $\vartheta_{0}\triangleq\overline{T}$, and 
\begin{align*}
\vartheta_{k+1} & \triangleq\max_{i\in\mathcal{V}}\underline{t}^{i}(\vartheta_{k})+2\overline{T},\quad k\in\mathbb{N}_{0}.
\end{align*}
First, we show that for every $\omega\in H$,
\begin{align}
\int_{\vartheta_{k}-\overline{T}}^{\vartheta_{k+1}-\overline{T}}\sum_{i\in\mathcal{V}}g^{i}(s)\mathrm{d}s & \geq\big(\max_{i\in\mathcal{V}}\mathbbm{1}[|\mathrm{ave}^{i}(\vartheta_{k})|\geq\varepsilon]\big)\underline{T}.\label{eq:glowerbound}
\end{align}
 Notice that $\int_{\vartheta_{k}-\overline{T}}^{\vartheta_{k+1}-\overline{T}}\sum_{i\in\mathcal{V}}g^{i}(s)\mathrm{d}s\geq0$,
since $g^{i}(s)\geq0$ for $i\in\mathcal{V}$. Therefore, in the case
where $\max_{i\in\mathcal{V}}\mathbbm{1}[|\mathrm{ave}^{i}(\vartheta_{k})|\geq\varepsilon]=0$,
we have \eqref{eq:glowerbound}. Now, consider the case where $\max_{i\in\mathcal{V}}\mathbbm{1}[|\mathrm{ave}^{i}(\vartheta_{k})|\geq\varepsilon]=1$.
In this case, there exists an agent $i^{*}\in\mathcal{V}$ such that
$|\mathrm{ave}^{i^{*}}(\vartheta_{k})|\geq\varepsilon$. This agent
makes a successful communication with its neighbors at time $\underline{t}^{i^{*}}(\vartheta_{k})\in(\vartheta_{k},\vartheta_{k+1})$.
If $|\mathrm{ave}^{i^{*}}(\underline{t}^{i^{*}}(\vartheta_{k}))|\geq\varepsilon$,
then since $T^{i}<\Delta^{i}$, by Lemma~\ref{KeyLemma}, we have
$|u^{i^{*}}(t)|=1$, $t\in[\underline{t}^{i^{*}}(\vartheta_{k}),\underline{t}^{i^{*}}(\vartheta_{k})+T^{i^{*}})$.
Since $|u^{i^{*}}(t)|=1$ implies $g^{i^{*}}(t)=1$, we have $g^{i^{*}}(t)=1$,
$t\in[\underline{t}^{i^{*}}(\vartheta_{k}),\underline{t}^{i^{*}}(\vartheta_{k})+T^{i^{*}})$.
As a result, 
\begin{align}
 & \int_{\vartheta_{k}-\overline{T}}^{\vartheta_{k+1}-\overline{T}}\sum_{i\in\mathcal{V}}g^{i}(s)\mathrm{d}s\geq\int_{\underline{t}^{i^{*}}(\vartheta_{k})}^{\underline{t}^{i^{*}}(\vartheta_{k})+T^{i}}\sum_{i\in\mathcal{V}}g^{i}(s)\mathrm{d}s\nonumber \\
 & \quad\geq\int_{\underline{t}^{i^{*}}(\vartheta_{k})}^{\underline{t}^{i^{*}}(\vartheta_{k})+T^{i}}g^{i^{*}}(s)\mathrm{d}s=T^{i^{*}}\geq\underline{T}.\label{eq:Tiineq}
\end{align}

On the other hand, if $|\mathrm{ave}^{i^{*}}(\underline{t}^{i^{*}}(\vartheta_{k}))|<\varepsilon$,
then there may be two cases: 1) agent $i^{*}$ was in the process
of changing its state at time $\vartheta_{k}$ and 2) the state of
another agent $j^{*}\in\mathcal{N}_{i^{*}}$ became closer to the
state of agent $i^{*}$ in the interval between the times $\vartheta_{k}$
and $\underline{t}^{i^{*}}(\vartheta_{k})$, since $|\mathrm{ave}^{i^{*}}(\vartheta_{k})|\geq\varepsilon$.
Case 1 implies that for some $\hat{k}\in\mathbb{N}_{0}$, we have
$\varphi_{\hat{k}}^{i^{*}}=1$ and $|\mathrm{ave}^{i^{*}}(t_{\hat{k}}^{i^{*}})|\geq\varepsilon$
with $t_{\hat{k}}^{i^{*}}\in[\vartheta_{k}-\overline{T},\vartheta_{k}]$.
Thus, by Lemma~\ref{KeyLemma}, we get $|u^{i^{*}}(t)|=1$ for $t\in[t_{\hat{k}}^{i^{*}},t_{\hat{k}}^{i^{*}}+T^{i^{*}})$
implying $g^{i^{*}}(t)=1$, $t\in[t_{\hat{k}}^{i^{*}},t_{\hat{k}}^{i^{*}}+T^{i^{*}})$.
Consequently, 
\begin{align}
 & \int_{\vartheta_{k}-\overline{T}}^{\vartheta_{k+1}-\overline{T}}\sum_{i\in\mathcal{V}}g^{i}(s)\mathrm{d}s\geq\int_{t_{\hat{k}}^{i^{*}}}^{t_{\hat{k}}^{i^{*}}+T^{i^{*}}}g^{i^{*}}(s)\mathrm{d}s=T^{i^{*}}\geq\underline{T}.\label{eq:Tiineq2}
\end{align}
Case 2 implies that agent $j^{*}$ updated its state at least once
in the interval $[\vartheta_{k}-\overline{T},\underline{t}^{i^{*}}(\vartheta_{k})+T^{j^{*}})$.
In other words, for some $\tilde{k}\in\mathbb{N}_{0}$, we have $\varphi_{\tilde{k}}^{j^{*}}=1$
and $|\mathrm{ave}^{j^{*}}(t_{\tilde{k}}^{j^{*}})|\geq\varepsilon$
with $t_{\tilde{k}}^{j^{*}}\in[\vartheta_{k}-\overline{T},\underline{t}^{i^{*}}(\vartheta_{k})+T^{j^{*}})$.
It then follows by Lemma~\ref{KeyLemma} that $|u^{j^{*}}(t)|=1$
for $t\in[t_{\tilde{k}}^{j^{*}},t_{\tilde{k}}^{j^{*}}+T^{j^{*}})$.
Therefore,  $g^{j^{*}}(t)=1$ for $t\in[t_{\tilde{k}}^{j^{*}},t_{\tilde{k}}^{j^{*}}+T^{j^{*}})$,
and hence,
\begin{align}
\int_{\vartheta_{k}-\overline{T}}^{\vartheta_{k+1}-\overline{T}}\sum_{i\in\mathcal{V}}g^{i}(s)\mathrm{d}s & \geq\int_{t_{\tilde{k}}^{j^{*}}}^{t_{\tilde{k}}^{j^{*}}+T^{j^{*}}}g^{j^{*}}(s)\mathrm{d}s=T^{j^{*}}\geq\underline{T}.\label{eq:Tjineq}
\end{align}
The inequalities \eqref{eq:Tiineq}--\eqref{eq:Tjineq} that are
obtained for different cases imply \eqref{eq:glowerbound}. 

It follows from \eqref{eq:epsilon-integral} and \eqref{eq:glowerbound}
that for every $\omega\in H$, 
\begin{align*}
 & \sum_{k=0}^{\infty}\max_{i\in\mathcal{V}}\mathbbm{1}[|\mathrm{ave}^{i}(\vartheta_{k})|\geq\varepsilon]\leq\frac{1}{\underline{T}}\sum_{k=0}^{\infty}\int_{\vartheta_{k}-\overline{T}}^{\vartheta_{k+1}-\overline{T}}\sum_{i\in\mathcal{V}}g^{i}(s)\mathrm{d}s\\
 & \quad=\int_{0}^{\infty}\sum_{i\in\mathcal{V}}g^{i}(s)\mathrm{d}s\leq\frac{1}{\underline{T}}\frac{V(x(0))}{\alpha},
\end{align*}
 which implies that there exists $k^{*}\in\mathbb{N}_{0}$ such that
for every $k\in\{k^{*},k^{*}+1,\ldots\}$, $\max_{i\in\mathcal{V}}\mathbbm{1}[|\mathrm{ave}^{i}(\vartheta_{k})|\geq\varepsilon]=0.$
Thus, for each agent $i\in\mathcal{V}$, we have $|\mathrm{ave}^{i}(\vartheta_{k})|<\varepsilon$
and $|\mathrm{ave}^{i}(\vartheta_{k+1})|<\varepsilon$ for $k\in\{k^{*},k^{*}+1,\ldots\}$.
Therefore, $\max_{t\in[\vartheta_{k},\vartheta_{k+1})}\max_{i\in\mathcal{V}}\mathbbm{1}[|\mathrm{ave}^{i}(t)|\geq\varepsilon]=1$
implies that at least one agent $i^{*}\in\mathcal{V}$ started changing
its state at some time $\tilde{t}_{k}^{i^{*}}\in(\vartheta_{k},\vartheta_{k+1})$,
and thus by Lemma~\ref{KeyLemma}, we have $|u^{i^{*}}(t)|=1$ for
$t\in[\tilde{t}_{k}^{i^{*}},\tilde{t}_{k}^{i^{*}}+T^{i^{*}})$ implying
$g^{i^{*}}(t)=1$, $t\in[\tilde{t}_{k}^{i^{*}},\tilde{t}_{k}^{i^{*}}+T^{i^{*}})$.
Consequently, by using an argument similar to the one that we used
for obtaining \eqref{eq:glowerbound}, we obtain
\begin{align*}
 & \max_{t\in[\vartheta_{k},\vartheta_{k+1})}\max_{i\in\mathcal{V}}\mathbbm{1}[|\mathrm{ave}^{i}(t)|\geq\varepsilon]\\
 & \quad\leq\frac{1}{\underline{T}}\int_{\vartheta_{k}}^{\vartheta_{k+1}+\overline{T}}\sum_{i\in\mathcal{V}}g^{i}(s)\mathrm{d}s\leq\frac{1}{\underline{T}}\int_{\vartheta_{k}-\overline{T}}^{\vartheta_{k+2}-\overline{T}}\sum_{i\in\mathcal{V}}g^{i}(s)\mathrm{d}s\\
 & \quad=\frac{1}{\underline{T}}\int_{\vartheta_{k}-\overline{T}}^{\vartheta_{k+1}-\overline{T}}\sum_{i\in\mathcal{V}}g^{i}(s)\mathrm{d}s+\frac{1}{\underline{T}}\int_{\vartheta_{k+1}-\overline{T}}^{\vartheta_{k+2}-\overline{T}}\sum_{i\in\mathcal{V}}g^{i}(s)\mathrm{d}s,
\end{align*}
for every $k\in\{k^{*},k^{*}+1,\ldots\}$. Now, by using this inequality
together with \eqref{eq:epsilon-integral}, 
\begin{align*}
 & \sum_{k=k^{*}}^{\infty}\max_{t\in[\vartheta_{k},\vartheta_{k+1})}\max_{i\in\mathcal{V}}\mathbbm{1}[|\mathrm{ave}^{i}(t)|\geq\varepsilon]\\
 & \quad\leq\frac{1}{\underline{T}}\sum_{k=k^{*}}^{\infty}\int_{\vartheta_{k}-\overline{T}}^{\vartheta_{k+1}-\overline{T}}\sum_{i\in\mathcal{V}}g^{i}(s)\mathrm{d}s\\
 & \quad\quad+\frac{1}{\underline{T}}\sum_{k=k^{*}+1}^{\infty}\int_{\vartheta_{k}-\overline{T}}^{\vartheta_{k+1}-\overline{T}}\sum_{i\in\mathcal{V}}g^{i}(s)\mathrm{d}s\\
 & \quad\leq\frac{2}{\underline{T}}\int_{0}^{\infty}\sum_{i\in\mathcal{V}}g^{i}(s)\mathrm{d}s\leq\frac{2}{\underline{T}}\frac{V(x(0))}{\alpha},
\end{align*}
which implies that for every $\omega\in H$, there exists $\tilde{k}^{*}\in\{k^{*},k^{*}+1,\ldots\}$
such that 
\begin{align*}
\max_{t\in[\vartheta_{k},\vartheta_{k+1})}\max_{i\in\mathcal{V}}\mathbbm{1}[|\mathrm{ave}^{i}(t)| & \geq\varepsilon]=0,\,\,k\in\{\tilde{k}^{*},\tilde{k}^{*}+1,\ldots\}.
\end{align*}
 Thus, for every $\omega\in H$, we have $|\mathrm{ave}^{i}(t)|<\varepsilon$
for $i\in\mathcal{V}$ and $t\geq\vartheta_{\tilde{k}^{*}}$, implying
almost sure finite time approximate consensus, since $\mathbb{P}[H]=1$.
\end{proof} 

Condition \eqref{eq:phi_condition} in Theorem~\ref{Theorem:consensus}
is concerned with the long run statistics of inter-agent communication
attempts. Specifically, $\sum_{k=0}^{\infty}\varphi_{k}^{i}$ in \eqref{eq:phi_condition}
represents the total number of successful communication attempts of
agent $i$. Under the condition \eqref{eq:phi_condition}, this number
exceeds all nonnegative integers almost surely, indicating that each
agent $i$ can achieve infinitely many successful communications with
its neighbors in the long run. In Section~\ref{sec:Deterministic-Jamming-and},
we will show that \eqref{eq:phi_condition} holds and consensus can
be achieved under different attack strategies that satisfy Assumption~\ref{Assumption:duration_sync}
on the average duration of jamming.

For the proof of Theorem~\ref{Theorem:consensus}, we utilize the
function $(1/2)x^{\mathrm{T}}(t)Lx(t)$ and explore its evolution.
This approach is also utilized by \citeasnoun{de_persis_ternary_2013},
\citeasnoun{senejohnny_2015}, and \citeasnoun{senejohnny2017jamming}.
However, there are some key differences. An important role is played
by Lemma~\ref{KeyLemma}. Notice also that the analysis is facilitated
by the choice of parameters $T^{i}\in(0,\min\{\frac{\varepsilon}{2d^{i}},\Delta^{i}\})$,
$i\in\mathcal{V}$. In particular, $T^{i}<\Delta^{i}$ allows us to
utilize Lemma~\ref{KeyLemma} to characterize the durations for which
agent $i$ has a nonzero control input. Furthermore, $T^{i}<\frac{\varepsilon}{2d^{i}}$
ensures a certain decay for the Lyapunov-like function $(1/2)x^{\mathrm{T}}(t)Lx(t)$
after a successful communication by agent $i$. 

In our ternary control approach, after each successful transmission
at $t_{k}^{i}$, agent $i$ may only apply a constant and bounded
control input for a maximum duration of $T^{i}$. Thus the worst-case
change in relative state positions ($\sum_{j\in\mathcal{N}_{i}}(x^{j}(t)-x^{i}(t))$)
during a control input application by agent $i$ can be calculated,
since all agents can only change their states with speed $1$. This
worst case is taken into account in the design of $T^{i}$ for agent
$i$. If instead of the ternary input, real-valued ${\rm ave}^{i}(t_{k}^{i})$
is used, then agent $i$ would not be able to know how its neighbors
may move. This is because the control input of a neighboring agent
$j\in\mathcal{N}^{i}$ depends on other agents' states that agent
$i$ does not have access to. As a result, agent $i$ may not be able
to choose an appropriate control input application duration to guarantee
a decrease in the Lyapunov-like function. 

We note that the notion of \emph{finite-time approximate consensus}
that we utilize in Theorem~\ref{Theorem:consensus} is similar to
the notion of \emph{finite-time contractive stability} discussed in
\citeasnoun{dorato2006overview} for nonlinear dynamical systems.
In particular, agent states that are initially outside the approximate
consensus set $\mathcal{D}_{\varepsilon}$ enter $\mathcal{D}_{\varepsilon}$
in finite time and stay there, with probability one. This approximate
consensus time is a random variable that depends on the initial agent
states, the network topology, the communication attempt times, as
well as the jamming times and durations. 

Our control approach and hence the consensus time achieved with it
have different characteristics from the ones in \citeasnoun{de_persis_ternary_2013},
\citeasnoun{senejohnny_2015}, and \citeasnoun{senejohnny2017jamming}.
In those works, the control input values are changed at communication
attempt time instants and at each communication attempt time $t_{k}^{i}$,
agent $i$ decides the next communication time based on the average
distance $|\mathrm{ave}^{i}(t_{k})|$ from its neighbors. If this
distance is large, then the duration until the next communication
time becomes large. On the other hand, in our approach, when agent
$i$ makes a successful communication at time $t_{k}^{i}$ and sets
a nonzero control input, this input is not necessarily kept constant
until the next communication time $t_{k+1}^{i}$. As we discuss above,
the time $t_{k+1}^{i}$ may be far from $t_{k}^{i}$. In such cases,
the control input is set to zero at time $t_{k}^{i}+T^{i}$ and zero
input is used until time $t_{k+1}^{i}$. With this approach, we are
allowed to pick large $\Delta^{i}$ values in our randomized communication
approach to increase the expected interval length between consecutive
communication attempts and reduce the number of transmissions. Consensus
is guaranteed as long as $T^{i}\in(0,\min\{\frac{\varepsilon}{2d^{i}},\Delta^{i}\}),$
$i\in\mathcal{V}$. In the absence of jamming attacks, the approximate
consensus time achieved with our control approach can be larger than
the time achieved with the approach of \citeasnoun{de_persis_ternary_2013},
since in our approach, control inputs of agents may be set to zero
between consecutive communication times. We note that to reduce the
expected total duration for which the control input is zero, $T^{i}$
can be selected close to $\Delta^{i}$, which also reduces the time
when consensus is achieved. 

\section{Deterministic Jamming and Communication-Aware Jamming \label{sec:Deterministic-Jamming-and}}

In this section, we consider two different attack strategies that
a jamming attacker may follow. We show that consensus can be achieved
in both cases. 

\subsection{Consensus Under Deterministic Attacks \label{subsec:Consensus-Under-Deterministic}}

First, we consider the attack strategy where the starting time and
the duration of the jamming attacks do not depend on the time instants
at which the agents try to communicate. In particular, concerning
the sequences $\{a_{k}\}_{k\in\mathbb{N}_{0}}$ and $\{\tau_{k}\}_{k\in\mathbb{N}_{0}}$,
we assume the following. 

\begin{assum} \label{assumption:deterministic_sync} The sequences
$\{a_{k}\}_{k\in\mathbb{N}_{0}}$ and $\{\tau_{k}\}_{k\in\mathbb{N}_{0}}$,
which characterize the jamming attacks, are decided deterministically,
that is, for every $\omega\in\Omega$ and $k\in\mathbb{N}_{0}$, 
\begin{align}
a_{k}(\omega) & =\bar{a}_{k},\quad\tau_{k}(\omega)=\bar{\tau}_{k},
\end{align}
where $\bar{a}_{k}\geq0$ and $\bar{\tau}_{k}\geq0$ for $k\in\mathbb{N}_{0}$
are fixed scalars. \end{assum}

Assumption~\ref{assumption:deterministic_sync} is useful to model
scenarios where the attacker cannot detect the transmissions on the
communication channels. Note that the attacker may still be knowledgeable
on certain properties of the multi-agent system such as the number
of agents, communication topology, as well as the scalars $\Delta^{i}$,
$i\in\mathcal{V}$, used in the communication protocol. 

Our analysis relies on a few key definitions. First, let 
\begin{align}
\gamma^{i} & \triangleq\min\Big\{ k\in\mathbb{N}\colon k\Delta^{i}>\kappa/(1-\rho)\Big\},\quad i\in\mathcal{V}.\label{eq:gamma-def}
\end{align}
Now define $\hat{\Delta}^{i}>0$ and $\hat{\varphi}_{k}^{i}\in\{0,1\}$,
$k\in\mathbb{N}_{0}$, by 
\begin{align}
\hat{\Delta}^{i} & \triangleq\gamma^{i}\Delta^{i},\label{eq:Delta-hat-def}\\
\hat{\varphi}_{k}^{i} & \triangleq\begin{cases}
0, & \mathrm{if}\,\,\varphi_{k\gamma^{i}}^{i}=0,\cdots,\varphi_{(k+1)\gamma^{i}-1}^{i}=0,\\
1, & \mathrm{otherwise}.
\end{cases}\label{eq:phi_hat_def}
\end{align}
 With these definitions, $\hat{\Delta}^{i}$ is an integer multiple
of $\Delta^{i}$ that is selected to be larger than $\kappa/(1-\rho)$.
In the interval $[k\hat{\Delta}^{i},(k+1)\hat{\Delta}^{i})$, agent
$i$ makes $\gamma^{i}$ number of communication attempts with its
neighbors, and moreover, $\hat{\varphi}_{k}^{i}$ takes the value
$0$ if all of these attempts fail and $1$ if one or more of these
attempts are successful. We emphasize that $\gamma^{i}$, $\hat{\Delta}^{i}$,
and $\hat{\varphi}_{k}^{i}$ are used only for the purpose of analysis,
and their values are not needed in our stochastic communication protocol. 

We now show that under Assumptions \ref{Assumption:duration_sync}
and \ref{assumption:deterministic_sync}, agents can successfully
communicate with their neighbors infinitely many times in the long
run, almost surely.

\begin{prop} \label{proposition:persistency_async} For any jamming
attacks described by sequences $\{a_{k}\}_{k\in\mathbb{N}_{0}}$ and
$\{\tau_{k}\}_{k\in\mathbb{N}_{0}}$ that satisfy Assumptions \ref{Assumption:duration_sync}
and \ref{assumption:deterministic_sync}, the equality in \eqref{eq:phi_condition}
holds. \end{prop}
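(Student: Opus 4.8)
The plan is to show $\mathbb{P}[\sum_{k=0}^\infty \varphi_k^i \geq M] = 1$ for every $M$ and every agent $i$, which (since $\varphi_k^i \in \{0,1\}$) is equivalent to showing $\mathbb{P}[\sum_{k=0}^\infty \varphi_k^i = \infty] = 1$. The key structural observation, encoded in the definitions \eqref{eq:gamma-def}--\eqref{eq:phi_hat_def}, is that $\hat\Delta^i = \gamma^i \Delta^i > \kappa/(1-\rho)$, and $\kappa/(1-\rho)$ is exactly the longest continuous jamming duration permitted by Assumption~\ref{Assumption:duration_sync}. Hence in any window $[k\hat\Delta^i,(k+1)\hat\Delta^i)$ of length $\hat\Delta^i$, the attack cannot cover the whole window: by \eqref{eq:attack-duration-condition}, $|\overline{\mathcal{A}}(k\hat\Delta^i,(k+1)\hat\Delta^i)| \leq \kappa + \rho\hat\Delta^i < \hat\Delta^i$, so the attack-free set $\overline{\mathcal{A}}^{\mathrm c}$ within this window has Lebesgue measure at least $\hat\Delta^i - \kappa - \rho\hat\Delta^i = (1-\rho)\hat\Delta^i - \kappa > 0$. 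This attack-free set is, by Assumption~\ref{assumption:deterministic_sync}, a \emph{deterministic} subset of the window.

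First I would fix $i$ and $k$, and estimate $\mathbb{P}[\hat\varphi_k^i = 1]$ from below. The window $[k\hat\Delta^i,(k+1)\hat\Delta^i)$ is partitioned into $\gamma^i$ subintervals $[j\Delta^i,(j+1)\Delta^i)$ for $j = k\gamma^i,\dots,(k+1)\gamma^i-1$, and on the $j$th subinterval agent $i$ picks $t_j^i$ uniformly. By \eqref{eq:phi-i-k-def}, $\varphi_j^i = 1$ iff $t_j^i$ lands in the deterministic attack-free set $S_j := \overline{\mathcal{A}}^{\mathrm c} \cap [j\Delta^i,(j+1)\Delta^i)$, an event of probability $|S_j|/\Delta^i$, independent across $j$ by Definition~\ref{Def:Communication}. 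Since $\sum_j |S_j| = |\overline{\mathcal{A}}^{\mathrm c}(k\hat\Delta^i,(k+1)\hat\Delta^i)| \geq (1-\rho)\hat\Delta^i - \kappa$, I get
\begin{align*}
\mathbb{P}[\hat\varphi_k^i = 0] &= \prod_{j=k\gamma^i}^{(k+1)\gamma^i-1}\Bigl(1 - \frac{|S_j|}{\Delta^i}\Bigr) \leq \prod_{j}\exp\Bigl(-\frac{|S_j|}{\Delta^i}\Bigr)\\
&= \exp\Bigl(-\frac{1}{\Delta^i}\sum_j |S_j|\Bigr) \leq \exp\Bigl(-\frac{(1-\rho)\hat\Delta^i - \kappa}{\Delta^i}\Bigr) =: q^i < 1,
\end{align*}
a bound \emph{uniform in $k$}. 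Crucially, because the $t_j^i$ for distinct $j$ are independent (Definition~\ref{Def:Communication}) and the attack pattern is non-random, the events $\{\hat\varphi_k^i = 0\}$ are independent across $k$ as well, since they depend on disjoint blocks of the $t_j^i$'s.

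Next I would invoke the (second) Borel--Cantelli lemma, or simply a direct computation: $\mathbb{P}[\hat\varphi_k^i = 0 \text{ for all } k = 0,\dots,N-1] \leq (q^i)^N \to 0$, and more to the point, by independence and $\sum_k \mathbb{P}[\hat\varphi_k^i = 1] \geq \sum_k (1-q^i) = \infty$, Borel--Cantelli gives $\mathbb{P}[\hat\varphi_k^i = 1 \text{ for infinitely many } k] = 1$. Since each $\hat\varphi_k^i = 1$ corresponds to at least one $j$ in its block with $\varphi_j^i = 1$, infinitely many $\hat\varphi_k^i = 1$ forces $\sum_{k=0}^\infty \varphi_k^i = \infty$ on that almost-sure event. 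Thus $\mathbb{P}[\sum_k \varphi_k^i = \infty] = 1$, hence $\mathbb{P}[\sum_k \varphi_k^i \geq M] = 1$ for all $M$, for each $i$; taking the finite intersection over $i \in \mathcal{V}$ preserves probability one, yielding \eqref{eq:phi_condition}.

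The main obstacle is not any single estimate but making the independence argument airtight: one must be careful that $\hat\varphi_k^i$ genuinely depends only on $\{t_j^i : k\gamma^i \leq j < (k+1)\gamma^i\}$ and that Assumption~\ref{assumption:deterministic_sync} is what removes any coupling between the attack times and the random communication times — this is precisely the point the paper emphasizes in the introduction. A secondary subtlety is confirming $(1-\rho)\hat\Delta^i - \kappa > 0$, which is exactly the defining property \eqref{eq:gamma-def} of $\gamma^i$; I would state this explicitly as the first line of the proof. Everything else ($1-x \leq e^{-x}$, the uniform-distribution probability, Borel--Cantelli) is routine.
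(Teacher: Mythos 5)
Your proposal is correct, and its overall skeleton coincides with the paper's: group the $\gamma^i$ attempts into superblocks $[k\hat\Delta^i,(k+1)\hat\Delta^i)$, obtain a uniform positive lower bound on the per-block success probability $\mathbb{P}[\hat\varphi_k^i=1]$ using Assumption~\ref{Assumption:duration_sync} and the defining property $\hat\Delta^i>\kappa/(1-\rho)$ of $\gamma^i$, observe that the blocks are independent because the attack pattern is deterministic and the $t_j^i$ are independent, and close with the second Borel--Cantelli lemma. The one step you handle differently is the per-block estimate: the paper introduces an auxiliary index $\beta_k^i$ uniform on $\{0,\ldots,\gamma^i-1\}$ so that $\hat t_k^i=t^i_{k\gamma^i+\beta_k^i}$ is uniform on the whole superblock, and lower-bounds $\mathbb{P}[B_k^i]$ by the probability that this single randomized attempt avoids the attack set, giving the bound $1-\rho-\kappa/\hat\Delta^i>0$; you instead compute the failure probability exactly as the product $\prod_j(1-|S_j|/\Delta^i)$ over the subintervals and bound it via $1-x\le e^{-x}$, which yields the (tighter) bound $\mathbb{P}[\hat\varphi_k^i=0]\le\exp\bigl(-((1-\rho)\hat\Delta^i-\kappa)/\Delta^i\bigr)<1$. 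Both estimates rest on the same fact that the deterministic attack-free set in the superblock has measure at least $(1-\rho)\hat\Delta^i-\kappa>0$; your version is slightly more elementary (no auxiliary randomization) and gives a sharper constant, while the paper's randomized-selection device is the one it later adapts, in modified form, for the communication-aware case. Your independence justification and the final passage from infinitely many $\hat\varphi_k^i=1$ to \eqref{eq:phi_condition} are both sound.
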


\begin{proof} It follows from \eqref{eq:phi_hat_def} that for every
$i\in\mathcal{V}$, 
\begin{align}
\mathbb{P}\Bigl[\sum_{k=0}^{\infty}\varphi_{k}^{i}\geq M\Bigr] & \geq\mathbb{P}\Bigl[\sum_{k=0}^{\infty}\hat{\varphi}_{k}^{i}\geq M\Bigr],\quad M\in\mathbb{N}_{0}.\label{eq:p-ineq}
\end{align}
 In what follows, we show \eqref{eq:phi_condition} by proving that
\begin{equation}
\mathbb{P}\Bigl[\sum_{k=0}^{\infty}\hat{\varphi}_{k}^{i}\geq M\Bigr]=1,\quad M\in\mathbb{N}_{0},\quad i\in\mathcal{V}.\label{eq:lemma-result}
\end{equation}

First, let $B_{k}^{i}\triangleq\{\omega\in\Omega:\hat{\varphi}_{k}^{i}(\omega)=1\}$,
$k\in\mathbb{N}_{0}$, and $E\triangleq\bigcap_{l=0}^{\infty}\Bigl(\bigcup_{k\geq l}B_{k}^{i}\Bigr)$.
Furthermore, for each $k\in\mathbb{N}_{0}$, let $\beta_{k}^{i}\colon\Omega\to\{0,1,\ldots,\gamma^{i}-1\}$
be a random variable distributed according to $\mathbb{P}[\beta_{k}^{i}=l]=1/\gamma^{i}$
for each $l\in\{0,1,\ldots,\gamma^{i}-1\}$, and define $\hat{t}_{k}^{i}\triangleq t_{k\gamma^{i}+\beta_{k}^{i}}^{i}$,
$k\in\mathbb{N}_{0}$. 

Note that $\hat{t}_{k}^{i}$ is a random variable distributed uniformly
in $[k\hat{\Delta}^{i},(k+1)\hat{\Delta}^{i})$. Since $B_{k}^{i}=\bigcup_{l=0}^{\gamma^{i}-1}\{t_{k\gamma^{i}+l}^{i}\in\overline{\mathcal{A}}^{\mathrm{c}}(k\hat{\Delta}^{i},(k+1)\hat{\Delta}^{i})\}$,
we have $B_{k}^{i}\supseteq\{\hat{t}_{k}^{i}\in\overline{\mathcal{A}}^{\mathrm{c}}(k\hat{\Delta}^{i},(k+1)\hat{\Delta}^{i})\}$.
Hence,
\begin{align*}
\mathbb{P}[B_{k}^{i}] & \geq\mathbb{P}[\hat{t}_{k}^{i}\in\overline{\mathcal{A}}^{\mathrm{c}}(k\hat{\Delta}^{i},(k+1)\hat{\Delta}^{i})]\\
 & =\mathbb{P}[\hat{t}_{k}^{i}\notin\overline{\mathcal{A}}(k\hat{\Delta}^{i},(k+1)\hat{\Delta}^{i})]\\
 & =\frac{\hat{\Delta}^{i}-|\overline{\mathcal{A}}(k\hat{\Delta}^{i},(k+1)\hat{\Delta}^{i})|}{\hat{\Delta}^{i}},\quad k\in\mathbb{N}_{0}.
\end{align*}
 By Assumption \ref{Assumption:duration_sync} and $\hat{\Delta}^{i}>\kappa/(1-\rho)$,
we obtain 
\begin{align*}
\mathbb{P}[B_{k}^{i}] & \geq\frac{\hat{\Delta}^{i}-\kappa-\rho\hat{\Delta}^{i}}{\hat{\Delta}^{i}}=1-\rho-\cfrac{\kappa}{\hat{\Delta}^{i}}>0,\quad k\in\mathbb{N}_{0}.
\end{align*}
As a consequence, $\sum_{k=0}^{\infty}\mathbb{P}[B_{k}^{i}]=\infty$.
Now, since $t_{0}^{i},t_{1}^{i},\ldots$ are independent and the sequences
$\{a_{k}\}_{k\in\mathbb{N}_{0}}$ and $\{\tau_{k}\}_{k\in\mathbb{N}_{0}}$
are deterministic (by Assumption~\ref{assumption:deterministic_sync}),
the events $B_{0}^{i},B_{1}^{i},\ldots$ are independent. Therefore,
it follows from $\sum_{k=0}^{\infty}\mathbb{P}[B_{k}^{i}]=\infty$
and the Borel-Cantelli Lemma (see Theorem 3.22 of \citeasnoun{karrprobabilitybook})
that $\mathbb{P}[E]=1$. Consequently, noting that $\{\omega\in\Omega\colon\sum_{k=0}^{\infty}\hat{\varphi}_{k}^{i}(\omega)\geq M\}\supseteq E$,
we obtain $\mathbb{P}[\sum_{k=0}^{\infty}\hat{\varphi}_{k}^{i}\geq M]\geq\mathbb{P}[E]$.
Hence, \eqref{eq:lemma-result} holds. Finally, \eqref{eq:phi_condition}
follows from \eqref{eq:p-ineq} and \eqref{eq:lemma-result}. \end{proof}

Proposition~\ref{proposition:persistency_async} implies that agents
can achieve infinitely many successful communications with their neighbors
in the long run under any deterministic attack strategy satisfying
\eqref{eq:attack-duration-condition} in Assumption \ref{Assumption:duration_sync}. 

The proof of Proposition~\ref{proposition:persistency_async} relies
on a few essential principles. First of all, we do not directly compute
the successful communications in each $\Delta^{i}$-length interval.
Instead, we look at the longer $\hat{\Delta}^{i}$-length intervals
and compute how many of these intervals include successful communications.
This is useful due to the fact that $\hat{\Delta}^{i}$ is chosen
for the analysis to be larger than the longest possible duration $\kappa/(1-\rho)$
of a continuous jamming attack. Regardless of how large $\kappa\geq0$
and $\rho\in(0,1)$ can be, there always exists such a $\hat{\Delta}^{i}$
as given in \eqref{eq:Delta-hat-def}. We remark again that since
$\hat{\Delta}^{i}$ is needed only for the analysis, its value is
not necessary for the multi-agent operation. 

In the proof of Proposition~\ref{proposition:persistency_async},
we also take advantage of the uniform distribution of the communication
attempt times in each $\Delta^{i}$-length interval. The probability
of at least one successful communication in a $\hat{\Delta}^{i}$-length
interval $[k\hat{\Delta}^{i},(k+1)\hat{\Delta}^{i})$ is lower-bounded
by the probability of an event that we construct in the proof. This
is the event that one of communication attempt times that is selected
uniformly randomly from the $\gamma^{i}$ number of attempt times
in the interval $[k\hat{\Delta}^{i},(k+1)\hat{\Delta}^{i})$ does
not face a jamming attack. The uniform distribution property of the
attempt times over the $\Delta^{i}$-length intervals and thus the
$\hat{\Delta}^{i}$-length intervals allows derivation of the probability
bound. 

The proof of Proposition~\ref{proposition:persistency_async} also
relies on the fact that the attacks are deterministic, and hence the
attack times do not depend on the communication attempt times. 

Next, by using Proposition~\ref{proposition:persistency_async} and
Theorem~\ref{Theorem:consensus}, we show that under deterministic
attacks, the multi-agent system \eqref{eq:scalar-dynamics}, \eqref{eq:control-input}
achieves consensus in finite time, almost surely. 

\begin{thm}\label{Theorem:consensus-deterministic} Consider the
multi-agent system \eqref{eq:scalar-dynamics}, \eqref{eq:control-input}
with $T^{i}\in(0,\min\{\frac{\varepsilon}{2d^{i}},\Delta^{i}\})$
where $\varepsilon>0$. For any jamming attacks described by sequences
$\{a_{k}\}_{k\in\mathbb{N}_{0}}$ and $\{\tau_{k}\}_{k\in\mathbb{N}_{0}}$
that satisfy Assumptions \ref{Assumption:duration_sync} and \ref{assumption:deterministic_sync},
$x(t)$ converges in finite time to a vector $x^{*}\in\mathbb{R}^{n}$
belonging to the set $\mathcal{D}_{\varepsilon}$ given by \eqref{eq:finalset},
almost surely. \end{thm}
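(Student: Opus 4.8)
The plan is to obtain this result as a direct corollary of the two main ingredients already in place: Proposition~\ref{proposition:persistency_async}, which verifies the probabilistic hypothesis \eqref{eq:phi_condition} for deterministic attacks, and Theorem~\ref{Theorem:consensus}, which converts that hypothesis into almost-sure finite-time approximate consensus. First I would observe that the standing parameter condition here, $T^{i}\in(0,\min\{\frac{\varepsilon}{2d^{i}},\Delta^{i}\})$, is exactly the one required by Theorem~\ref{Theorem:consensus}, so no additional restriction on the control gains or on $\Delta^{i}$ is needed.

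Next I would fix any attack sequences $\{a_{k}\}_{k\in\mathbb{N}_{0}}$ and $\{\tau_{k}\}_{k\in\mathbb{N}_{0}}$ satisfying Assumptions~\ref{Assumption:duration_sync} and \ref{assumption:deterministic_sync}. Proposition~\ref{proposition:persistency_async} then gives $\mathbb{P}[\sum_{k=0}^{\infty}\varphi_{k}^{i}\geq M]=1$ for every $M\in\mathbb{N}_{0}$ and every $i\in\mathcal{V}$, i.e.\ exactly \eqref{eq:phi_condition}. Applying Theorem~\ref{Theorem:consensus} with this verified hypothesis yields that $x(t)$ converges in finite time to some $x^{*}\in\mathcal{D}_{\varepsilon}$ almost surely, which is the claim.

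Since both invoked results are already proved in the excerpt, there is no real obstacle here; the only point requiring a moment's care is making explicit that the deterministic-attack setting of Assumption~\ref{assumption:deterministic_sync} is precisely what Proposition~\ref{proposition:persistency_async} assumes (in particular, the independence used there between the randomized communication times and the attack pattern), so that its conclusion is directly available as the input to Theorem~\ref{Theorem:consensus}. The proof is therefore a two-line composition of the preceding statements.
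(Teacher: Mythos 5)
Your proposal is correct and coincides with the paper's own proof, which is exactly the two-line composition you describe: Proposition~\ref{proposition:persistency_async} supplies \eqref{eq:phi_condition} under Assumptions~\ref{Assumption:duration_sync} and \ref{assumption:deterministic_sync}, and Theorem~\ref{Theorem:consensus} then yields almost-sure finite-time convergence to $\mathcal{D}_{\varepsilon}$. Your added remark about checking that the parameter condition on $T^{i}$ matches that of Theorem~\ref{Theorem:consensus} is a reasonable point of care but introduces nothing beyond what the paper does.
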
 

\begin{proof}By Proposition~\ref{proposition:persistency_async},
we have \eqref{eq:phi_condition}. Thus, the result follows from Theorem~\ref{Theorem:consensus}.
\end{proof}

We emphasize again that the result does not depend on the frequency
of attacks. In particular, the proposed stochastic communication protocol
allows us to deal with attack scenarios where the jamming is turned
on and off very frequently. 

\subsection{Consensus Under Communication-Aware Attacks \label{subsec:Consensus-Under-Communication-Aw}}

\label{section:wiser_attack}

We now explore an attack strategy where the attacker can sense communication
attempts on the channel and turns the jamming on and off based on
the activity of the agents. Here, we consider a simpler setup where
$\Delta^{i}=\Delta$, $i\in\mathcal{V}$, with $\Delta>0$. In this
setup, $k$th communication attempt time of each agent is in the interval
$[k\Delta,(k+1)\Delta)$, i.e., $t_{k}^{i}\in[k\Delta,(k+1)\Delta)$,
$i\in\mathcal{V}$. Let $\bar{t}_{k}\triangleq\max_{i\in\mathcal{V}}\{t_{k}^{i}\}$. 

We consider an attack strategy where the attacker knows about the
communication protocol as well as $\Delta$. The attacker generates
an attack so that for each interval $[k\Delta,(k+1)\Delta)$,
\begin{enumerate}
\item [i)] the jamming attack starts from $t=k\Delta$,
\item [ii)] the jamming attack continues until time $\bar{t}_{k}$ as long
as Assumption \ref{Assumption:duration_sync} is satisfied.
\end{enumerate}
Under this strategy, for the interval $[k\Delta,(k+1)\Delta)$, the
attacker turns off jamming right after all communication attempts
are blocked. This allows the attacker to preserve energy to be used
later. Furthermore, the attacker can block transmissions among the
agents in a more thorough way than the deterministic strategy. Clearly,
if the attacker wants to interfere at the randomly chosen communication
attempt times for sure, he has to keep jamming until $\bar{t}_{k}$
in each interval $[k\Delta,(k+1)\Delta)$. 

To characterize $a_{k},\tau_{k},k\in\mathbb{N}_{0}$, for this strategy,
first let 
\begin{align*}
s_{k} & \triangleq\mathrm{max}\Big\{ s\in[0,\Delta]\,\colon\\
 & \quad\,\,|\overline{\mathcal{A}}(\tau,k\Delta)|+s\leq\kappa+\rho(k\Delta+s-\tau),\,\,\tau\in[0,k\Delta]\Big\},
\end{align*}
 for $k\in\mathbb{N}_{0}$. Note that $s_{k}\in[0,\Delta]$ denotes
the largest duration that a jamming attack starting at $k\Delta$
can last without violating the condition \eqref{eq:attack-duration-condition}
in Assumption \ref{Assumption:duration_sync}. Now, let
\begin{align*}
\underline{s}_{k} & \triangleq\min\{\bar{t}_{k}-k\Delta,s_{k}\},\quad k\in\mathbb{N}_{0}.
\end{align*}
 Observe that $\underline{s}_{k}$ gives the duration of the attack
in the interval $[k\Delta,(k+1)\Delta)$ for this strategy. In particular,
the jamming attack is turned on for $t\in[k\Delta,k\Delta+\underline{s}_{k}]$,
and turned off for $t\in(k\Delta+\underline{s}_{k},(k+1)\Delta)$.
Hence, $a_{k},\tau_{k}$ can be given by 
\begin{align}
a_{k} & =k\Delta,\quad\tau_{k}=\underline{s}_{k},\quad k\in\mathbb{N}_{0}.\label{eq:wiser-h-tau-1}
\end{align}
Consequently, the set of time instants where communication is not
possible in the interval $[k\Delta,(k+1)\Delta)$ is then given by
the set $\mathcal{A}_{k}\triangleq[a_{k},a_{k}+\tau_{k}]=[k\Delta,k\Delta+\underline{s}_{k}]$.
We remark that the communication-aware jamming described by \eqref{eq:wiser-h-tau-1}
satisfies \eqref{eq:attack-duration-condition} in Assumption~\ref{Assumption:duration_sync}
by construction. 

\begin{figure}[t]
\begin{centering}
\includegraphics[width=0.85\columnwidth]{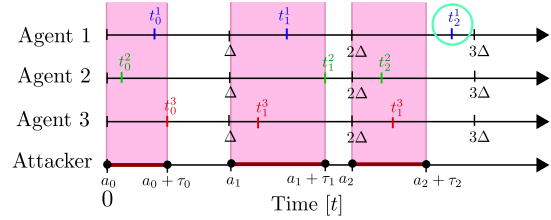} \vskip -9pt
\par\end{centering}
\caption{Attack times for the communication-aware attack strategy. }
\label{figure:wiser_attack_pattern} 
\end{figure}

To illustrate the properties of the communication-aware attack strategy,
we show an example attack scenario in Figure~\ref{figure:wiser_attack_pattern}.
Here, the attacker is able to block all communications that are attempted
in the interval $[0,\Delta)$ by jamming the network between times
$0$ and $\bar{t}_{0}=\max_{i\in\mathcal{V}}t_{0}^{i}$. After blocking,
the attacker turns off jamming and waits until the next interval $[\Delta,2\Delta)$.
In the interval $[\Delta,2\Delta)$, the duration of the attack is
relatively large, because agent $2$ attempts communication towards
the end of the interval. The attacker has to use large energy resources
for this interval. As a result, the attacker cannot conduct an attack
with very long duration in the interval $[2\Delta,3\Delta)$, since
Assumption~\ref{Assumption:duration_sync} holds only for a short
duration of length $\tau_{2}$. And after that, the attacker has to
turn off jamming to save resources. Therefore, in the interval $[2\Delta,3\Delta)$,
it happens that agent $1$ can successfully communicate after the
jamming is turned off. As a result, we have $\varphi_{0}^{1}=0$,
$\varphi_{1}^{1}=0$, $\varphi_{2}^{1}=1$. 

We show in the following that the agents achieve consensus under the
communication-aware attack strategy given in \eqref{eq:wiser-h-tau-1}
by using our proposed stochastic communication protocol. 

Now, consider $\gamma^{i}\in\mathbb{N}$, $\hat{\Delta}^{i}>0$, and
$\hat{\varphi}_{k}^{i}\in\{0,1\}$, $k\in\mathbb{N}_{0}$, $i\in\mathcal{V}$,
given by \eqref{eq:gamma-def}--\eqref{eq:phi_hat_def} with $\Delta^{i}=\Delta$.
Note that in this subsection, we have $\gamma^{i}=\gamma^{j}$, $\hat{\Delta}^{i}=\hat{\Delta}^{j}$
for all $i,j\in\mathcal{V}$, since $\Delta^{i}=\Delta^{j}$, $i,j\in\mathcal{V}$.
We can thus simplify the notation by setting 
\begin{align}
\gamma & \triangleq\gamma^{1},\quad\hat{\Delta}\triangleq\hat{\Delta}^{1}.\label{eq:gamma-Delta-phi-def}
\end{align}

The analysis of consensus under communication-aware jamming attacks
is quite different from the case with deterministic jamming attacks.
Here, we utilize a filtration representing the timing of the attacks
and communication attempt instants. In particular, we consider the
filtration $\{\mathcal{H}_{k}^{i}\}_{k\in\mathbb{N}_{0}}$, where
$\mathcal{H}_{k}^{i}$ denotes the $\sigma$-algebra generated by
the random variables $a_{0},a_{1},\ldots a_{(k+1)\gamma-1},\tau_{0},\tau_{1},\ldots,\tau_{(k+1)\gamma-1}$,
and $t_{0}^{i},t_{1}^{i},\ldots,t_{(k+1)\gamma-1}^{i}$. Notice that
$\varphi_{j}^{i}$, $j\in\{0,\ldots,(k+1)\gamma-1\}$, are $\mathcal{H}_{k}^{i}$-measurable
random variables, because $\varphi_{j}^{i}$ is determined by $a_{j}$,
$\tau_{j}$, and $t_{j}^{i}$. Consequently, $\hat{\varphi}_{j}^{i}$,
$j\in\{0,\ldots,k\}$, are also $\mathcal{H}_{k}^{i}$-measurable.
In the statement of the results below, in addition to $\{\mathcal{H}_{k}^{i}\}_{k\in\mathbb{N}_{0}}$,
we also use the $\sigma$-algebra $\mathcal{H}_{-1}^{i}$, which we
define as $\mathcal{H}_{-1}^{i}\triangleq\{\emptyset,\Omega\}$. 

In what follows our main objective is to show that the agents can
communicate with their neighbors infinitely many times in the long
run satisfying \eqref{eq:phi_condition}, even though the network
faces communication-aware jamming attacks described in \eqref{eq:wiser-h-tau-1}.
We show this by establishing several key results. First, we investigate
the probability of successful communications in the intervals $[k\hat{\Delta},(k+1)\hat{\Delta})$,
$k\in\mathbb{N}_{0}$. The following result provides a positive lower-bound
for the conditional probability of a successful communication in $[k\hat{\Delta},(k+1)\hat{\Delta})$
given $\mathcal{H}_{k-1}^{i}$ (i.e., the information on all previous
intervals).

\begin{lem} \label{Lemma:Positive-prob} Consider the stochastic
communication protocol in Definition~\ref{Def:Communication}. For
the attacks given by \eqref{eq:wiser-h-tau-1}, we have 
\begin{equation}
\mathbb{P}[\hat{\varphi}_{k}^{i}=1\,|\,\mathcal{H}_{k-1}^{i}]\geq2q^{\gamma},\quad k\in\mathbb{N}_{0},\quad i\in\mathcal{V},\label{eq:Phi-last-prob-1}
\end{equation}
 where 
\begin{align}
 & q\triangleq\frac{\widetilde{\Delta}}{\Delta},\quad\widetilde{\Delta}\triangleq\cfrac{(1-\rho)(\hat{\Delta}-\underline{\Delta})}{\gamma+1},\quad\underline{\Delta}\triangleq\frac{\kappa}{1-\rho}.\label{eq:qdef}
\end{align}
 \end{lem}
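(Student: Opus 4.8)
The plan is to lower-bound $\mathbb{P}[\hat{\varphi}_{k}^{i}=1\,|\,\mathcal{H}_{k-1}^{i}]$ by exhibiting a concrete, easily-analyzed event $G_k^i \subseteq B_k^i = \{\hat{\varphi}_k^i=1\}$ whose conditional probability I can compute directly from the uniform distribution of the $t_j^i$. The key observation is that within the block $[k\hat{\Delta},(k+1)\hat{\Delta})$ there are $\gamma$ sub-intervals $[j\Delta,(j+1)\Delta)$, $j\in\{k\gamma,\dots,(k+1)\gamma-1\}$, each carrying one communication attempt $t_j^i$; the attacker's jamming in sub-interval $j$ occupies $[j\Delta, j\Delta+\underline{s}_j]$ where $\underline{s}_j = \min\{\bar t_j - j\Delta, s_j\}$ and $s_j \le \kappa + \rho\big(j\Delta + s_j - \tau\big)$ for all admissible $\tau$. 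The crucial structural fact is that $\underline{s}_j$ depends on the \emph{other} agents' attempt times $t_j^\ell$, $\ell\neq i$, and on the past history, but \emph{not} on $t_j^i$ itself in a way that can be adversarially aimed — because $t_j^i$ is drawn uniformly and independently \emph{after} nothing about its value is revealed. So conditioning on $\mathcal{H}_{k-1}^i$ (all of the previous $k$ blocks) and then on the attacker's choices within block $k$, the attempt $t_j^i$ is still uniform on $[j\Delta,(j+1)\Delta)$.

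First I would quantify how much of block $k$ the attacker can possibly jam. By Assumption~\ref{Assumption:duration_sync}, the total jamming time in $[\underline{\Delta}$-adjusted windows$]$ is controlled; more precisely, since $\hat{\Delta} > \underline{\Delta} = \kappa/(1-\rho)$, the jamming inside $[k\hat\Delta,(k+1)\hat\Delta)$ together with the constraint $|\overline{\mathcal{A}}(\tau,t)|\le\kappa+\rho(t-\tau)$ applied with $\tau$ chosen at the start of the block gives that the attacker can cover at most roughly $\kappa + \rho\hat\Delta$ of the $\hat\Delta$ window, i.e.\ an unjammed portion of size at least $(1-\rho)\hat\Delta - \kappa = (1-\rho)(\hat\Delta - \underline\Delta)$. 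Distributing this over the $\gamma$ sub-intervals and allowing one extra sub-interval's worth of slack (this is where the $\gamma+1$ in the definition of $\widetilde{\Delta}$ enters) shows there is at least one sub-interval $j^*$ in which the jammed part has length at most $\Delta - \widetilde{\Delta}$, hence an unjammed tail of length at least $\widetilde{\Delta}$. Because the attack in that sub-interval is the initial segment $[j^*\Delta, j^*\Delta + \underline s_{j^*}]$, the unjammed part is a terminal interval of length $\ge\widetilde{\Delta}$, and the event $\{t_{j^*}^i \text{ lands in that terminal interval}\}$ has conditional probability $\ge \widetilde{\Delta}/\Delta = q$, independently across $j$ given the attacker's (history-measurable) choices.

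The subtle point — and the main obstacle — is that the identity of the "good" sub-interval $j^*$ is itself random and correlated with the other agents' attempts and with the history, so I cannot simply say "$t_{j^*}^i$ is uniform." To get around this I would bound from below by the event that \emph{all} $\gamma$ attempts $t_{k\gamma}^i,\dots,t_{(k+1)\gamma-1}^i$ land in the terminal $\widetilde\Delta$-length slot of their respective sub-intervals: on this event, in particular the attempt in sub-interval $j^*$ is unjammed, so $\hat\varphi_k^i = 1$. Since the $t_j^i$ are mutually independent and, conditionally on $\mathcal{H}_{k-1}^i$ and on the attacker's history-measurable attack pattern, each is uniform on its sub-interval with at least a $\widetilde\Delta$-length sub-region that is guaranteed unjammed regardless of the attacker's feasible choices, the conditional probability of this intersection is at least $q^{\gamma}$. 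The factor $2$ in $2q^\gamma$ should come from a slightly sharper accounting: the unjammed length in the good sub-interval is actually at least $2\widetilde\Delta$ — because the averaging argument with the $\gamma+1$ denominator leaves a factor-two margin in the worst case (or, alternatively, one combines the guaranteed-unjammed slot at the \emph{end} of the window with the fact that the attack must start at $k\Delta$, giving two disjoint favorable regions). I would pin down exactly which of these gives the clean factor $2$ by tracking the inequality $|\overline{\mathcal{A}}(\cdot,\cdot)|\le\kappa+\rho(\cdot)$ carefully through the $\hat\Delta$ block; this bookkeeping, not any deep probabilistic idea, is the part that needs care, and it is where I expect the write-up to spend most of its effort. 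Finally, taking expectations of the conditional bound and noting it holds for every $k\in\mathbb{N}_0$, $i\in\mathcal{V}$, yields \eqref{eq:Phi-last-prob-1}.
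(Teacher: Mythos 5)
Your overall strategy---force the attacker to spend its duration budget by conditioning on the agent's attempts landing late in their sub-intervals, then invoke Assumption~\ref{Assumption:duration_sync} to conclude that some attempt must survive---is the same mechanism the paper uses. But there are two genuine gaps. First, your intermediate claim that each sub-interval ``has at least a $\widetilde{\Delta}$-length sub-region that is guaranteed unjammed regardless of the attacker's feasible choices'' is false: under \eqref{eq:wiser-h-tau-1} the attacker jams $[j\Delta, j\Delta+\underline{s}_{j}]$ and, budget permitting, $\underline{s}_{j}$ can be as large as $\bar{t}_{j}-j\Delta$, which can be arbitrarily close to $\Delta$; no single sub-interval carries a deterministic unjammed slot. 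Only the \emph{aggregate} budget over the block is controlled. Your fallback event (all $\gamma$ attempts in the terminal $\widetilde{\Delta}$-slots) does rescue a bound, because blocking all of them would require total jamming $\geq\gamma(\Delta-\widetilde{\Delta})$ while the budget is $\kappa+\rho\hat{\Delta}=\gamma(\Delta-\widetilde{\Delta})-\widetilde{\Delta}$; but that event has conditional probability $q^{\gamma}$, not $2q^{\gamma}$, so you prove a strictly weaker inequality than \eqref{eq:Phi-last-prob-1}.

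Second, the factor $2$ is exactly the bookkeeping you defer, and neither of your two guesses for its origin is correct (the margin left by the all-$\gamma$ event is one slot of size $\widetilde{\Delta}$, not two, and there are not two disjoint favorable regions). The paper obtains it by treating the last attempt asymmetrically: condition only on the event $G_{k}$ that the first $\gamma-1$ attempts land in the terminal $\widetilde{\Delta}$-slots (probability $q^{\gamma-1}$, and these attempts are then all \emph{blocked}, which is the point---they drain the budget by at least $(\gamma-1)(\Delta-\widetilde{\Delta})$). The residual budget for the final sub-interval is then at most $\kappa+\rho\hat{\Delta}-(\gamma-1)(\Delta-\widetilde{\Delta})=\Delta-2\widetilde{\Delta}$, and since the attack there is an initial segment starting at its left endpoint, the terminal $2\widetilde{\Delta}$-portion is unjammed; the last attempt $t_{(k+1)\gamma-1}^{i}$, being independent of $G_{k}$, of $\overline{\mathcal{A}}(b_{k},c_{k})$, and of $\mathcal{H}_{k-1}^{i}$, lands there with probability $2q$, giving $2q\cdot q^{\gamma-1}$. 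This independence step also answers the measurability worry you raise but do not resolve: the attack pattern inside block $k$ is \emph{not} $\mathcal{H}_{k-1}^{i}$-measurable, and the paper avoids conditioning on it by passing to the deterministic inclusion $\{|\overline{\mathcal{A}}(b_{k},c_{k})|\geq(\gamma-1)(\Delta-\widetilde{\Delta})\}\subseteq\{|\overline{\mathcal{A}}(c_{k},d_{k})|\leq\Delta-2\widetilde{\Delta}\}$. Finally, the case $\gamma=1$ needs (and in the paper receives) a separate direct computation, since there are no ``first $\gamma-1$'' attempts to condition on; your argument does not cover it.
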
 

 The proof of Lemma~\ref{Lemma:Positive-prob} is given in  Appendix~\ref{Appendix-Proof-Positive-prob}. 

In \eqref{eq:Phi-last-prob-1}, the conditional probability term $\mathbb{P}[\hat{\varphi}_{k}^{i}=1\,|\,\mathcal{H}_{k-1}^{i}]$
is an $\mathcal{H}_{k-1}^{i}$-measurable random variable. Furthermore,
its expectation gives the probability of having $\hat{\varphi}_{k}^{i}=1$,
i.e., $\mathbb{P}[\hat{\varphi}_{k}^{i}=1]=\mathbb{E}[\mathbb{P}[\hat{\varphi}_{k}^{i}=1\,|\,\mathcal{H}_{k-1}^{i}]]$.
Hence, Lemma~\ref{Lemma:Positive-prob} implies $\mathbb{P}[\hat{\varphi}_{k}^{i}]>0$,
$k\in\mathbb{N}_{0}$. In other words, for each interval $[k\hat{\Delta},(k+1)\hat{\Delta})$,
our stochastic communication protocol guarantees a positive probability
for a successful communication.

In the proof of Lemma~\ref{Lemma:Positive-prob}, we consider the
interval $[k\hat{\Delta}^{i},(k+1)\hat{\Delta})$ that is composed
of $\gamma$ number of $\Delta$-length intervals. In each of these
$\Delta$-length intervals, agent $i$ attempts to communicate once.
In our approach, we find a lower bound for $\mathbb{P}[\hat{\varphi}_{k}^{i}=1\,|\,\mathcal{H}_{k-1}^{i}]$
(the conditional probability that at least $1$ out of $\gamma$ communication
attempts is successful). This is done by computing a lower bound for
$\mathbb{P}[\varphi_{(k+1)\gamma-1}^{i}=1|\mathcal{H}_{k-1}^{i}]$,
which is the conditional probability that the \emph{last} attempt
is successful. The key method in deriving this bound is the construction
of the event $G_{k}$ given in \eqref{eq:Gk-def}. Here, $G_{k}$
is the event that the first $\gamma-1$ number of communication attempts
of agent $i$ happen in the last $\widetilde{\Delta}$ units of time
in their respective $\Delta$-length intervals. If $G_{k}$ happens,
then it means that the attacker needs to use sufficiently large jamming
resources to block those first $\gamma-1$ attempts. As a result,
the attacker would not have enough resources left to guarantee blocking
the last attempt. This allows us to compute a lower bound of $\mathbb{P}[\{\varphi_{(k+1)\gamma-1}^{i}=1\}\cap G_{k}|\mathcal{H}_{k-1}^{i}]$.
We then use the inequality $\mathbb{P}[\varphi_{(k+1)\gamma-1}^{i}=1|\mathcal{H}_{k-1}^{i}]\geq\mathbb{P}[\{\varphi_{(k+1)\gamma-1}^{i}=1\}\cap G_{k}|\mathcal{H}_{k-1}^{i}]$
to arrive at the result \eqref{eq:Phi-last-prob-1}. This result is
crucial in proving the following lemma.

\begin{lem} \label{LemmaIntersection} Consider the attack strategy
described by \eqref{eq:wiser-h-tau-1}. Under the stochastic communication
protocol in Definition~\ref{Def:Communication}, we have 
\begin{align}
\mathbb{P}[\bigcap_{k=0}^{N-1}\{\hat{\varphi}_{k}^{i}=\overline{\varphi}_{k+1}\}] & \leq\prod_{j=1}^{N}\big(1-2q^{\gamma}(1-\overline{\varphi}_{j})\big),\label{eq:IntersectionLemmaResult}
\end{align}
for $\overline{\varphi}_{1},\overline{\varphi}_{2},\ldots,\overline{\varphi}_{N}\in\{0,1\}$
and $N\in\mathbb{N}$. 

\end{lem}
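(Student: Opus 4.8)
The plan is to prove \eqref{eq:IntersectionLemmaResult} by induction on $N$, using the tower property of conditional expectation to peel off one factor at a time, with Lemma~\ref{Lemma:Positive-prob} supplying the per-step estimate. The crucial structural observation is that for each $k$, the event $\{\hat{\varphi}_k^i = \overline{\varphi}_{k+1}\}$ is $\mathcal{H}_k^i$-measurable, while $\bigcap_{k=0}^{N-2}\{\hat{\varphi}_k^i = \overline{\varphi}_{k+1}\}$ is $\mathcal{H}_{N-2}^i$-measurable, hence a fortiori $\mathcal{H}_{N-1}^i$-measurable. This is precisely the setup in which conditioning on $\mathcal{H}_{N-1}^i$ and then taking expectations will work.

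Concretely, I would write, using $A_N \triangleq \bigcap_{k=0}^{N-1}\{\hat{\varphi}_k^i = \overline{\varphi}_{k+1}\}$,
\begin{align*}
\mathbb{P}[A_N] &= \mathbb{E}\big[\mathbbm{1}[A_{N-1}]\,\mathbbm{1}[\hat{\varphi}_{N-1}^i = \overline{\varphi}_N]\big]\\
&= \mathbb{E}\big[\mathbbm{1}[A_{N-1}]\,\mathbb{E}[\mathbbm{1}[\hat{\varphi}_{N-1}^i = \overline{\varphi}_N]\,|\,\mathcal{H}_{N-2}^i]\big],
\end{align*}
where the second equality uses that $A_{N-1}$ is $\mathcal{H}_{N-2}^i$-measurable. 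Now I would bound the inner conditional probability: if $\overline{\varphi}_N = 1$, then $\mathbb{P}[\hat{\varphi}_{N-1}^i = 1\,|\,\mathcal{H}_{N-2}^i] \leq 1 = 1 - 2q^\gamma(1-\overline{\varphi}_N)$ trivially; if $\overline{\varphi}_N = 0$, then $\mathbb{P}[\hat{\varphi}_{N-1}^i = 0\,|\,\mathcal{H}_{N-2}^i] = 1 - \mathbb{P}[\hat{\varphi}_{N-1}^i = 1\,|\,\mathcal{H}_{N-2}^i] \leq 1 - 2q^\gamma$ by Lemma~\ref{Lemma:Positive-prob} (applied with $k = N-1$). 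In both cases the inner term is almost surely at most $1 - 2q^\gamma(1-\overline{\varphi}_N)$, a deterministic constant, so it pulls out of the expectation, yielding $\mathbb{P}[A_N] \leq \big(1 - 2q^\gamma(1-\overline{\varphi}_N)\big)\,\mathbb{P}[A_{N-1}]$. The induction hypothesis $\mathbb{P}[A_{N-1}] \leq \prod_{j=1}^{N-1}\big(1-2q^\gamma(1-\overline{\varphi}_j)\big)$ then closes the argument, and the base case $N=1$ is exactly Lemma~\ref{Lemma:Positive-prob} with $k=0$ together with $\mathcal{H}_{-1}^i = \{\emptyset,\Omega\}$ (so the conditional probability is the unconditional one).

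I expect the only genuine subtlety — not a hard obstacle, but the point requiring care — to be the measurability bookkeeping: verifying that $\hat{\varphi}_k^i$ is $\mathcal{H}_k^i$-measurable (stated in the excerpt, since $\hat{\varphi}_k^i$ is determined by $\varphi_{k\gamma}^i,\ldots,\varphi_{(k+1)\gamma-1}^i$, each of which is $\mathcal{H}_k^i$-measurable), and hence that $A_{N-1}$ is $\mathcal{H}_{N-2}^i$-measurable, which is what licenses pulling $\mathbbm{1}[A_{N-1}]$ out of the conditional expectation. Once the filtration indexing is lined up correctly — $\mathcal{H}_k^i$ carries information through the end of the $(k+1)$-th block of $\gamma$ attempts, so Lemma~\ref{Lemma:Positive-prob}'s conditioning on $\mathcal{H}_{k-1}^i$ matches the $\mathcal{H}_{N-2}^i$ appearing above — the rest is the routine telescoping above. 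No new probabilistic ideas beyond the tower property and Lemma~\ref{Lemma:Positive-prob} are needed.
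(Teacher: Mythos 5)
Your proof is correct and follows essentially the same route as the paper's: induction on $N$, conditioning on $\mathcal{H}_{N-2}^{i}$, pulling out the $\mathcal{H}_{N-2}^{i}$-measurable indicator of the first $N-1$ events via the tower property, and bounding the remaining conditional probability by $1-2q^{\gamma}(1-\overline{\varphi}_{N})$ using Lemma~\ref{Lemma:Positive-prob}. The base case and the measurability bookkeeping you flag are exactly the points the paper handles, so there is nothing to add.
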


\vskip 5pt 

 The proof of Lemma~\ref{LemmaIntersection} is given in Appendix~\ref{Appendix-Proof-Intersection}. 

Lemma~\ref{LemmaIntersection} provides an upper bound for the probability
of the event that the random variables $\hat{\varphi}_{0}^{i},\hat{\varphi}_{1}^{i},\ldots,\hat{\varphi}_{N-1}^{i}$
take the particular values $\overline{\varphi}_{1},\overline{\varphi}_{2},\ldots,\overline{\varphi}_{N}\in\{0,1\}$,
respectively. This result is important because the upper-bound can
be given in terms of the scalar $q$, which depends on $\rho$ and
$\kappa$ characterizing the attacker's capabilities as well as the
parameter $\Delta$ of the communication protocol. Notice that if
the sequence $\overline{\varphi}_{1},\overline{\varphi}_{2},\ldots,\overline{\varphi}_{N}$
is formed of $m$ number of $1$s and $N-m$ number of $0$s, then
the probability bound in \eqref{eq:IntersectionLemmaResult} is given
by $(1-2q^{\gamma})^{N-m}$. The following result is built upon this
observation. 

\begin{prop} \label{PropositionLowerBound} Consider the attack strategy
described by \eqref{eq:wiser-h-tau-1}. Under the stochastic communication
protocol in Definition~\ref{Def:Communication}, we have 
\begin{align}
\mathbb{P}[\sum_{k=0}^{N-1}\hat{\varphi}_{k}^{i}\geq M] & \geq1-\sum_{m=0}^{M-1}\frac{N!}{m!(N-m)!}(1-2q^{\gamma})^{N-m},\label{eq:LowerBoundResult}
\end{align}
 for all $M\in\{0,1,\ldots,N\}$ and $N\in\mathbb{N}$. 

\end{prop}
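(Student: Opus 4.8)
The plan is to derive Proposition~\ref{PropositionLowerBound} from Lemma~\ref{LemmaIntersection} by a union-bound argument over the ``bad'' events in which fewer than $M$ of the first $N$ intervals contain a successful communication. First I would observe that the event $\{\sum_{k=0}^{N-1}\hat{\varphi}_{k}^{i}\geq M\}$ has complement $\{\sum_{k=0}^{N-1}\hat{\varphi}_{k}^{i}\leq M-1\}$, and that this complement decomposes as a finite disjoint union over all binary strings $(\overline{\varphi}_{1},\ldots,\overline{\varphi}_{N})\in\{0,1\}^{N}$ having at most $M-1$ ones:
\begin{align*}
\Big\{\sum_{k=0}^{N-1}\hat{\varphi}_{k}^{i}\leq M-1\Big\}
 &= \bigcup_{\substack{(\overline{\varphi}_{1},\ldots,\overline{\varphi}_{N})\in\{0,1\}^{N}\\ \sum_{j}\overline{\varphi}_{j}\leq M-1}}\ \bigcap_{k=0}^{N-1}\{\hat{\varphi}_{k}^{i}=\overline{\varphi}_{k+1}\}.
\end{align*}
Summing probabilities over this disjoint union and applying the bound \eqref{eq:IntersectionLemmaResult} from Lemma~\ref{LemmaIntersection} to each term, I would use the fact that if the string has exactly $m$ ones (with $m\leq M-1$) then $\prod_{j=1}^{N}\big(1-2q^{\gamma}(1-\overline{\varphi}_{j})\big)=(1-2q^{\gamma})^{N-m}$, since each of the $N-m$ zeros contributes a factor $1-2q^{\gamma}$ and each of the $m$ ones contributes a factor $1$.

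Next I would group the strings by their number of ones $m$: there are exactly $\binom{N}{m}=\frac{N!}{m!(N-m)!}$ strings with $m$ ones, so the total probability of the complement is bounded above by $\sum_{m=0}^{M-1}\frac{N!}{m!(N-m)!}(1-2q^{\gamma})^{N-m}$. Taking complements gives precisely \eqref{eq:LowerBoundResult}. The degenerate case $M=0$ should be handled separately (the sum on the right is empty, so the bound reads $\mathbb{P}[\cdot]\geq 1$, which holds trivially); and one should note $1-2q^{\gamma}\in[0,1)$ so that the binomial-tail interpretation is meaningful, which follows from $q\in(0,1)$ by construction of $\widetilde{\Delta}$ and $\Delta$ in \eqref{eq:qdef}.

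I do not expect a serious obstacle here: the argument is essentially the observation that the right-hand side of \eqref{eq:LowerBoundResult} is one minus the lower tail of a $\mathrm{Binomial}(N,2q^{\gamma})$ distribution, and Lemma~\ref{LemmaIntersection} already furnishes the key estimate on cylinder-set probabilities that makes the $\hat{\varphi}_{k}^{i}$ dominate (in a suitable stochastic sense) an i.i.d.\ Bernoulli sequence with success parameter $2q^{\gamma}$. The only point requiring mild care is the bookkeeping of the disjoint decomposition and the combinatorial count $\binom{N}{m}$; since Lemma~\ref{LemmaIntersection} holds for every fixed string $(\overline{\varphi}_{1},\ldots,\overline{\varphi}_{N})$, no independence of the $\hat{\varphi}_{k}^{i}$ is needed, and the union bound is applied to a genuinely disjoint union so it is in fact an equality before the per-term estimate is invoked. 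This proposition then feeds into establishing \eqref{eq:phi_condition} for communication-aware attacks by letting $N\to\infty$ (the binomial tail vanishes), which in turn invokes Theorem~\ref{Theorem:consensus} for the consensus conclusion.
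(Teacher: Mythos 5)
Your proposal is correct and follows essentially the same route as the paper's proof: decompose the complement event $\{\sum_{k=0}^{N-1}\hat{\varphi}_{k}^{i}\leq M-1\}$ into cylinder sets indexed by binary strings, apply Lemma~\ref{LemmaIntersection} to each, observe that a string with $m$ ones yields the factor $(1-2q^{\gamma})^{N-m}$, and count the $\frac{N!}{m!(N-m)!}$ strings per level $m$. The only cosmetic difference is that the paper first union-bounds over the values $m\in\{0,\ldots,M-1\}$ and then over strings within each level, whereas you decompose directly into the disjoint union of all strings with at most $M-1$ ones; both reduce to the same sum.
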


\begin{proof}First, we obtain 
\begin{align}
 & \mathbb{P}[\sum_{k=0}^{N-1}\hat{\varphi}_{k}^{i}\geq M]=1-\mathbb{P}[\bigcup_{m=0}^{M-1}\{\sum_{k=0}^{N-1}\hat{\varphi}_{k}^{i}=m\}]\nonumber \\
 & \quad\geq1-\sum_{m=0}^{M-1}\mathbb{P}[\sum_{k=0}^{N-1}\hat{\varphi}_{k}^{i}=m].\label{eq:FirstIneqWithMandm}
\end{align}
Now, let $\Pi_{N,m}^{i}\triangleq\{\overline{\varphi}\in\{0,1\}^{N}\colon\overline{\varphi}^{\mathrm{T}}\overline{\varphi}=m\}$
for $m\in\{0,1,\ldots,M\}$ and $N\in\{M,M+1,\ldots\}$. Notice that
\begin{align}
\mathbb{P}[\sum_{k=0}^{N-1}\hat{\varphi}_{k}^{i}=m] & =\mathbb{P}[\cup_{\overline{\varphi}\in\Pi_{N,m}^{i}}\cap_{k=0}^{N-1}\{\hat{\varphi}_{k}^{i}=\overline{\varphi}_{k+1}\}]\nonumber \\
 & \leq\sum_{\overline{\varphi}\in\Pi_{N,m}^{i}}\mathbb{P}[\cap_{k=0}^{N-1}\{\hat{\varphi}_{k}^{i}=\overline{\varphi}_{k+1}\}].\label{eq:SumOverPi}
\end{align}
By using Lemma~\ref{LemmaIntersection} we obtain from \eqref{eq:SumOverPi}
that 
\begin{align}
\mathbb{P}[\sum_{k=0}^{N-1}\hat{\varphi}_{k}^{i}=m] & \leq\sum_{\overline{\varphi}\in\Pi_{N,m}^{i}}\prod_{j=1}^{N}\big(1-2q^{\gamma}(1-\overline{\varphi}_{j})\big).\label{eq:SumOfProducts}
\end{align}
Note that $\prod_{j=1}^{N}\big(1-2q^{\gamma}(1-\overline{\varphi}_{j})\big)=(1-2q^{\gamma})^{N-m}$
for $\overline{\varphi}\in\Pi_{N,m}^{i}$. Furthermore, the set $\Pi_{N,m}^{i}$
has $\frac{N!}{m!(N-m)!}$ elements. Therefore, it follows from \eqref{eq:SumOfProducts}
that 
\begin{align}
\mathbb{P}[\sum_{k=0}^{N-1}\hat{\varphi}_{k}^{i}=m] & \leq\sum_{\overline{\varphi}\in\Pi_{N,m}^{i}}(1-2q^{\gamma})^{N-m}\nonumber \\
 & =\frac{N!}{m!(N-m)!}(1-2q^{\gamma})^{N-m}.\label{eq:FinalIneqForProbm}
\end{align}
 Finally, by using \eqref{eq:FirstIneqWithMandm} and \eqref{eq:FinalIneqForProbm},
we arrive at \eqref{eq:LowerBoundResult}. \end{proof}

Proposition~\ref{PropositionLowerBound} provides a lower bound of
the probability that agent $i$ can communicate with its neighbors
at least $M$ times during the interval $[0,N\hat{\Delta})$. Notice
that as $N$ approaches $\infty$, this lower bound approaches $1$. 

\begin{thm} \label{Theorem:Wiser-phi} Consider the attack strategy
described by \eqref{eq:wiser-h-tau-1}. Under the stochastic communication
protocol in Definition~\ref{Def:Communication}, the equality in
\eqref{eq:phi_condition} holds. \end{thm}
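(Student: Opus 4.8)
The plan is to derive Theorem~\ref{Theorem:Wiser-phi} from Proposition~\ref{PropositionLowerBound} by letting $N\to\infty$ and then transferring the conclusion about $\sum_k\hat{\varphi}_k^i$ back to $\sum_k\varphi_k^i$ via the elementary domination $\sum_{k=0}^\infty\varphi_k^i\geq\sum_{k=0}^\infty\hat{\varphi}_k^i$ that already appeared (as \eqref{eq:p-ineq}) in the proof of Proposition~\ref{proposition:persistency_async}. So the real content is to show
\begin{equation}
\mathbb{P}\Bigl[\sum_{k=0}^{\infty}\hat{\varphi}_{k}^{i}\geq M\Bigr]=1,\quad M\in\mathbb{N}_{0},\ i\in\mathcal{V}.\label{eq:wiserhat}
\end{equation}
First I would fix $M\in\mathbb{N}_0$ and $i\in\mathcal{V}$ and observe that the events $A_N\triangleq\{\sum_{k=0}^{N-1}\hat{\varphi}_k^i\geq M\}$ are increasing in $N$, so by continuity of probability from below $\mathbb{P}[\sum_{k=0}^\infty\hat{\varphi}_k^i\geq M]=\lim_{N\to\infty}\mathbb{P}[A_N]$. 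Proposition~\ref{PropositionLowerBound} gives $\mathbb{P}[A_N]\geq 1-\sum_{m=0}^{M-1}\binom{N}{m}(1-2q^\gamma)^{N-m}$, so it suffices to show that each of the $M$ terms $\binom{N}{m}(1-2q^\gamma)^{N-m}$, $m=0,\dots,M-1$, tends to $0$ as $N\to\infty$.

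The key point for this limit is that $1-2q^\gamma\in(0,1)$, which requires $q\in(0,1)$ strictly, i.e.\ $0<\widetilde{\Delta}<\Delta$ in \eqref{eq:qdef}; I would note that $\widetilde{\Delta}=(1-\rho)(\hat{\Delta}-\underline{\Delta})/(\gamma+1)>0$ because $\rho\in(0,1)$ and $\hat{\Delta}>\underline{\Delta}=\kappa/(1-\rho)$ by the choice of $\gamma$ in \eqref{eq:gamma-def}, and $\widetilde{\Delta}<\Delta$ since $\hat{\Delta}=\gamma\Delta$ gives $\widetilde{\Delta}\leq(1-\rho)\gamma\Delta/(\gamma+1)<\Delta$; also $q^\gamma<1/2$ so $1-2q^\gamma>0$ — if $q^\gamma\ge 1/2$ the bound in Proposition~\ref{PropositionLowerBound} is vacuous but still valid, so one may as well assume $q^\gamma<1/2$ here since otherwise the conclusion needs a separate (easier, Borel--Cantelli--type) argument; more cleanly, even when the Proposition's lower bound is not directly useful, the same conditional estimate \eqref{eq:Phi-last-prob-1} with a second-Borel--Cantelli argument along the filtration $\{\mathcal{H}_k^i\}$ yields \eqref{eq:wiserhat}, and I would invoke that as the robust route. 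Taking the clean case $r\triangleq 1-2q^\gamma\in(0,1)$: for fixed $m$, $\binom{N}{m}\leq N^m/m!$ is polynomial in $N$ while $r^{N-m}=r^{-m}r^N$ decays geometrically, so $\binom{N}{m}r^{N-m}\to 0$; summing the finitely many $m<M$ gives $\sum_{m=0}^{M-1}\binom{N}{m}r^{N-m}\to 0$, hence $\mathbb{P}[A_N]\to 1$ and \eqref{eq:wiserhat} follows.

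Finally I would assemble: \eqref{eq:wiserhat} combined with $\sum_{k=0}^\infty\varphi_k^i\geq\sum_{k=0}^\infty\hat{\varphi}_k^i$ (immediate from \eqref{eq:phi_hat_def}, exactly as \eqref{eq:p-ineq}) gives $\mathbb{P}[\sum_{k=0}^\infty\varphi_k^i\geq M]\geq\mathbb{P}[\sum_{k=0}^\infty\hat{\varphi}_k^i\geq M]=1$ for all $M\in\mathbb{N}_0$ and $i\in\mathcal{V}$, which is precisely \eqref{eq:phi_condition}. The main obstacle is not analytic — the limit is a routine polynomial-versus-geometric comparison — but rather the bookkeeping needed to be sure Proposition~\ref{PropositionLowerBound}'s bound is non-trivial, i.e.\ that $2q^\gamma>0$ (so that $r<1$); this hinges on the positivity facts $\rho<1$, $\hat{\Delta}>\kappa/(1-\rho)$, and $\gamma\geq 1$, all of which are guaranteed by \eqref{eq:gamma-def}--\eqref{eq:qdef}. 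One should also double-check the edge case $M=0$, where the statement $\mathbb{P}[\sum\varphi_k^i\geq 0]=1$ is trivial, so the argument only needs $M\geq 1$.
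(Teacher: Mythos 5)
Your proposal is correct and follows essentially the same route as the paper's proof: define the increasing events $A_N$, apply Proposition~\ref{PropositionLowerBound}, pass to the limit via the monotone-convergence theorem for sets, and transfer to $\varphi_k^i$ through the domination $\sum_k\varphi_k^i\geq\sum_k\hat{\varphi}_k^i$. Your extra care about $1-2q^\gamma\in(0,1)$ is a harmless refinement (indeed $q^\gamma<((1-\rho)\gamma/(\gamma+1))^\gamma<1/2$, so the worry about a vacuous bound never materializes), and the polynomial-versus-geometric comparison is the same limit the paper asserts more tersely.
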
 \vskip 10pt

\begin{proof} Our initial goal is to show 
\begin{align}
\mathbb{P}[\sum_{k=0}^{\infty}\hat{\varphi}_{k}^{i}\geq M] & =1,\quad M\in\mathbb{N}_{0},\quad i\in\mathcal{V}.\label{eq:varphihatineq}
\end{align}
 To this end, first let $A_{N}\triangleq\{\omega\in\Omega\colon\sum_{k=0}^{N-1}\hat{\varphi}_{k}^{i}\geq M\}$,
$N\in\mathbb{N}$. Notice that $\mathbb{P}[A_{N}]=0$ for $N<M$.
For $N\geq M$, Proposition~\ref{PropositionLowerBound} implies
\begin{align}
\mathbb{P}[A_{N}] & \geq1-\sum_{m=0}^{M-1}\frac{N!}{m!(N-m)!}(1-2q^{\gamma})^{N-m}.\label{eq:PAnLowerBound}
\end{align}
 Since $1-2q^{\gamma}<1$, it follows from \eqref{eq:PAnLowerBound}
that $\lim_{N\to\infty}\mathbb{P}[A_{N}]=1$. The events $A_{N}$,
$N\in\mathbb{N}$, satisfy $A_{N}\subseteq A_{N+1}$. Hence, by the
\emph{monotone-convergence theorem for sets} (see Section~1.10 in
\citeasnoun{williams2010probability}), 
\begin{align}
\mathbb{P}[\sum_{k=0}^{\infty}\hat{\varphi}_{k}^{i} & \geq M]=\mathbb{P}[\lim_{N\to\infty}A_{N}]=\lim_{N\to\infty}\mathbb{P}[A_{N}]=1.\label{eq:MonotoneConvergenceTheoremResult}
\end{align}
Finally, since $\mathbb{P}[\sum_{k=0}^{\infty}\varphi_{k}^{i}\geq M]\geq\mathbb{P}[\sum_{k=0}^{\infty}\hat{\varphi}_{k}^{i}\geq M]$,
it follows from \eqref{eq:MonotoneConvergenceTheoremResult} that
\eqref{eq:phi_condition} holds. \end{proof}

Theorem~\ref{Theorem:Wiser-phi} shows that the agents can communicate
with their neighbors infinitely many times in the long run, even though
the network is attacked by an attacker that follows the communication-aware
attack strategy described in \eqref{eq:wiser-h-tau-1}. The next theorem
is the main result for the multi-agent system under communication-aware
attacks.

\begin{thm}\label{Theorem:consensus-wiser} Consider the multi-agent
system \eqref{eq:scalar-dynamics}, \eqref{eq:control-input} with
$T^{i}\in(0,\min\{\frac{\varepsilon}{2d^{i}},\Delta^{i}\})$ where
$\varepsilon>0$. For the attack strategy described by \eqref{eq:wiser-h-tau-1},
$x(t)$ converges in finite time to a vector $x^{*}\in\mathbb{R}^{n}$
belonging to the set $\mathcal{D}_{\varepsilon}$ given by \eqref{eq:finalset},
almost surely. \end{thm}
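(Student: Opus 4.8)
The plan is to reduce Theorem~\ref{Theorem:consensus-wiser} to the two results already established: Theorem~\ref{Theorem:Wiser-phi} (which verifies the hypothesis \eqref{eq:phi_condition} for the communication-aware attack strategy) and Theorem~\ref{Theorem:consensus} (which derives finite-time practical consensus from \eqref{eq:phi_condition}). First I would observe that the attack strategy described by \eqref{eq:wiser-h-tau-1} satisfies Assumption~\ref{Assumption:duration_sync} by construction — this was already noted in the text after \eqref{eq:wiser-h-tau-1}, since the durations $\underline{s}_k$ are chosen precisely so that \eqref{eq:attack-duration-condition} never fails. Second, I would invoke Theorem~\ref{Theorem:Wiser-phi} to conclude that under this attack strategy and the stochastic communication protocol of Definition~\ref{Def:Communication}, the equality \eqref{eq:phi_condition} holds for all $M\in\mathbb{N}_0$ and all $i\in\mathcal{V}$.

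With \eqref{eq:phi_condition} in hand, the remaining hypotheses of Theorem~\ref{Theorem:consensus} are exactly the standing assumptions of the present theorem: the multi-agent system is \eqref{eq:scalar-dynamics}, \eqref{eq:control-input}, and the control parameters satisfy $T^{i}\in(0,\min\{\tfrac{\varepsilon}{2d^{i}},\Delta^{i}\})$ with $\varepsilon>0$. Note the minor bookkeeping point that in this subsection $\Delta^i=\Delta$ for all $i$, so $\min\{\tfrac{\varepsilon}{2d^i},\Delta^i\}=\min\{\tfrac{\varepsilon}{2d^i},\Delta\}$; this is a special case of the condition required by Theorem~\ref{Theorem:consensus} and nothing changes. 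Applying Theorem~\ref{Theorem:consensus} then yields that $x(t)$ converges in finite time to some $x^{*}\in\mathcal{D}_{\varepsilon}$, almost surely, which is the claim.

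In short, the proof is a two-line citation: \textbf{By Theorem~\ref{Theorem:Wiser-phi}, \eqref{eq:phi_condition} holds for the attack strategy \eqref{eq:wiser-h-tau-1}; the conclusion then follows from Theorem~\ref{Theorem:consensus}.} There is no real obstacle here — all of the technical work has already been done, the hard parts being (i) the probabilistic lower bound of Lemma~\ref{Lemma:Positive-prob} obtained via the event $G_k$ and the filtration $\{\mathcal{H}_k^i\}$, (ii) the product bound of Lemma~\ref{LemmaIntersection}, and (iii) the passage to the limit via the monotone-convergence theorem for sets in Theorem~\ref{Theorem:Wiser-phi}. If I wanted to be careful I would double-check only that the definitions of $\gamma$, $\hat\Delta$, $\hat\varphi_k^i$ used in Theorem~\ref{Theorem:Wiser-phi} match those needed to feed into Theorem~\ref{Theorem:consensus} — but since Theorem~\ref{Theorem:Wiser-phi}'s conclusion is literally the statement of \eqref{eq:phi_condition} in terms of the original $\varphi_k^i$, this is immediate and the composition of the two theorems is clean.
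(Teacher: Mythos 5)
Your proposal is correct and coincides with the paper's own proof: Theorem~\ref{Theorem:Wiser-phi} supplies \eqref{eq:phi_condition} for the attack strategy \eqref{eq:wiser-h-tau-1}, and Theorem~\ref{Theorem:consensus} then gives almost-sure finite-time convergence to $\mathcal{D}_{\varepsilon}$. The extra bookkeeping remarks (that \eqref{eq:wiser-h-tau-1} satisfies Assumption~\ref{Assumption:duration_sync} by construction and that $\Delta^{i}=\Delta$ is a special case of the parameter condition) are accurate but not needed beyond the two-line citation.
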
 

\begin{proof}By Theorem~\ref{Theorem:Wiser-phi}, we have \eqref{eq:phi_condition}.
Consequently, the result follows from Theorem~\ref{Theorem:consensus}.
\end{proof}

So far we considered the consensus problem under both deterministic
attacks and communication-aware attacks. In both cases, the randomness
in the communication attempt times is the key property that enables
consensus regardless of the frequency of jamming. A difference is
that the attacker following the communication-aware attack strategy
can sense the network activity and switch off the jamming attack right
after blocking a communication attempt. This allows the attacker to
preserve energy. This is further illustrated through numerical examples
in the next section. 

\begin{figure}[t]
\begin{centering}
\includegraphics[width=0.55\columnwidth]{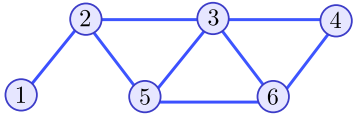} \vskip -8pt
\par\end{centering}
\caption{Communication topology of the multi-agent system. }
\label{figure:network_topology} 
\end{figure}

\section{Numerical Examples \label{sec:Numerical-Examples} }


In this section, we illustrate our results for the multi-agent system
with $n=6$ agents whose topology is shown in Figure~\ref{figure:network_topology}. 

\subsection{Deterministic Attacks \label{subsec:Deterministic-Attacks}}

We first consider a deterministic attack scenario where the strategy
of the attacker satisfies Assumptions~\ref{Assumption:duration_sync}
and \ref{assumption:deterministic_sync} with $\kappa=0.2$, $\rho=0.8$.
We utilize our proposed stochastic communication protocol with $\Delta^{i}=0.001+0.0001(i+1)$,
$i\in\mathcal{V}$. For the control laws \eqref{eq:control-input},
we choose $T^{i}=\Delta^{i}/1.01$, which satisfy $T^{i}\in(0,\min\{\frac{\varepsilon}{2d^{i}},\Delta^{i}\})$
with $\varepsilon=0.02$. Since Proposition~\ref{proposition:persistency_async}
implies \eqref{eq:phi_condition}, it follows from Theorem~\ref{Theorem:consensus-deterministic}
that the multi-agent system achieves consensus. 

In the top part of Figure~\ref{figure:agent-states-deterministic-attack},
we show sample paths of agent states under jamming attacks with low
frequency. We see that consensus is achieved around the time $t=3.85$.
 Each agent $i$ attempts to communicate once at a random time instant
at every $\Delta^{i}$ units of time. The agents keep their states
constant during long jamming intervals.

\begin{figure}[t]
\begin{centering}
\includegraphics[width=0.9\columnwidth]{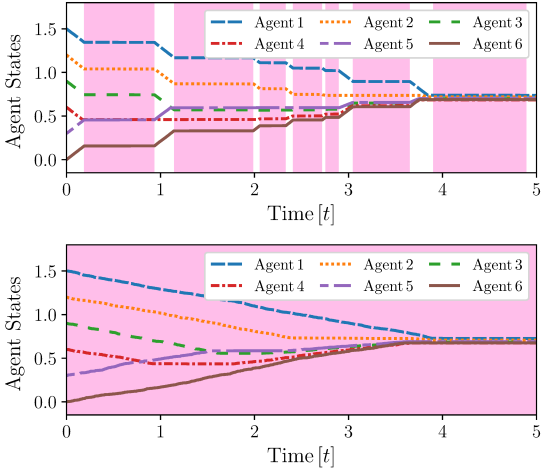}
\vskip -15pt
\par\end{centering}
\caption{Evolution of agent states under deterministic attacks with low frequency
(top) and high frequency (bottom) settings. }
\label{figure:agent-states-deterministic-attack} 
\end{figure}

The attack depicted in the top part of Figure~\ref{figure:agent-states-deterministic-attack}
is of low frequency, as the jamming is turned on and off only $7$
times during the interval $[0,5]$. We also consider a high frequency
case in the bottom plot of Figure~\ref{figure:agent-states-deterministic-attack},
where jamming is turned on and off $7935$ times during the interval
$[0,5]$, but the agent communication attempt times are the same as
those in the top plot. Also in this case, the agents reach the consensus
set $\mathcal{D}_{\varepsilon}$ around the time $t=3.91$. Both the
low and the high frequency attacks in Figure~\ref{figure:agent-states-deterministic-attack}
are generated randomly and independently of the communication attempt
times of the agents. Through repeated simulations, we also observe
that consensus is reached around the same time. 

Next, we consider periodically generated jamming attacks with 
\begin{align}
a_{k} & \triangleq\frac{k}{\sigma}+\frac{(1-\rho)}{\sigma},\quad\tau_{k}\triangleq\frac{\rho}{\sigma},\quad k\in\mathbb{N}_{0},\label{eq:example_attacks}
\end{align}
where $\sigma>0$ denotes the frequency of attacks (i.e., the number
of attack intervals in $1$ unit of time). Moreover, $\rho>0$ indicates
the ratio of the duration of attacks in each period. For each $\rho\in\{0.2,0.5,0.8\}$
and $\sigma\in\{10^{1},10^{3},10^{5}\}$ we repeat the simulation
$50$ times. For each simulation $j\in\{1,\ldots,50\}$, we calculate
$t_{\mathrm{C}}(j)\triangleq\inf\{t\colon x^{i}(t)\in\mathcal{D}_{\varepsilon},i\in\mathcal{V}\}$,
which is the time agents reach consensus. Then we obtain their mean
$m_{\mathrm{C}}>0$ and standard deviation $s_{\mathrm{C}}>0$. 

\begin{table}[t]
\caption{Mean $m_{\mathrm{C}}$ and standard deviation $s_{\mathrm{C}}$ of
consensus times for different values of $\rho$ and $\sigma$ in deterministic
attacks \eqref{eq:example_attacks}. }
\label{TableForMcandSc}

\renewcommand{\arraystretch}{0.8} \vskip 5pt

{\centering \fontsize{7}{11.52}\selectfont  

\begin{tabular}{ccccccccc}
\toprule 
 & \multicolumn{2}{c}{$\sigma=10^{1}$} &  & \multicolumn{2}{c}{$\sigma=10^{3}$} &  & \multicolumn{2}{c}{$\sigma=10^{5}$}\tabularnewline
\cmidrule{2-3} \cmidrule{3-3} \cmidrule{5-6} \cmidrule{6-6} \cmidrule{8-9} \cmidrule{9-9} 
$\rho$ & $m_{\mathrm{C}}$ & $s_{\mathrm{C}}$  &  & $m_{\mathrm{C}}$  & $s_{\mathrm{C}}$ &  & $m_{\mathrm{C}}$  & $s_{\mathrm{C}}$\tabularnewline
\midrule
\midrule 
$0.2$ & $1.143$ & $0.005$ &  & $1.108$ & $0.015$ &  & $1.110$ & $0.013$\tabularnewline
\midrule 
$0.5$ & $1.822$ & $0.006$ &  & $1.663$ & $0.029$ &  & $1.682$ & $0.035$\tabularnewline
\midrule 
$0.8$ & $4.532$ & $0.035$ &  & $3.962$ & $0.086$ &  & $4.010$ & $0.107$\tabularnewline
\bottomrule
\end{tabular}

} 
\end{table}

Table~\ref{TableForMcandSc} indicates that increasing the ratio
$\rho$ of the attack duration allows the attacker to delay the consensus.
On the other hand, consensus time is not influenced a lot by how frequent
the attacks are. For each value of $\rho$, mean consensus time $m_{\mathrm{C}}$
is similar under all attack frequency settings $\sigma=10^{1},$ $10^{3}$,
$10^{5}$. Furthermore, consensus times are finite in all simulations
and they do not show large deviation (i.e., $s_{\mathrm{C}}$ is small)
in all cases. The cases with $\rho=0.8$ indicate that periodic attacks
and the attack timings shown in Figure~\ref{figure:agent-states-deterministic-attack}
do not differ much in their effects on consensus times. 

\subsection{Communication-Aware Attacks \label{subsec:Communication-Aware-Attacks}}

Next, we consider the scenario where the attacker follows the communication-aware
attack strategy of \eqref{eq:wiser-h-tau-1} with the same parameters
$\kappa=0.2$ and $\rho=0.8$ as in Section~\ref{subsec:Deterministic-Attacks}. 

In this scenario, the intervals for the communication are selected
as $\Delta^{i}=\Delta=0.001$, $i\in\mathcal{V}$. Similar to the
deterministic case discussed above, for the control law \eqref{eq:control-input},
we choose $T^{i}=\Delta/1.01$, $i\in\mathcal{V}$, which satisfy
$T^{i}\in(0,\min\{\frac{\varepsilon}{2d^{i}},\Delta^{i}\})$ with
$\varepsilon=0.02$. Furthermore, Theorem~\ref{Theorem:Wiser-phi}
implies that \eqref{eq:phi_condition} holds. Therefore, it follows
from Theorem~\ref{Theorem:consensus-wiser} that the multi-agent
system with the stochastic communication protocol achieves consensus. 

We show the evolution of the agent states in Figure~\ref{figure:agent-states-wiser-attack}.
Notice that every communication attempt in the interval $[0,3.18]$
is blocked by the attacker. However, the attacker's energy resources
eventually become not sufficient. We observe in the enlarged plot
in the bottom part of Figure~\ref{figure:agent-states-wiser-attack}
that some of the communication attempts cannot be blocked by the attacker
and the agents eventually achieve consensus. 

\begin{figure}[t]
\begin{centering}
\includegraphics[width=0.9\columnwidth]{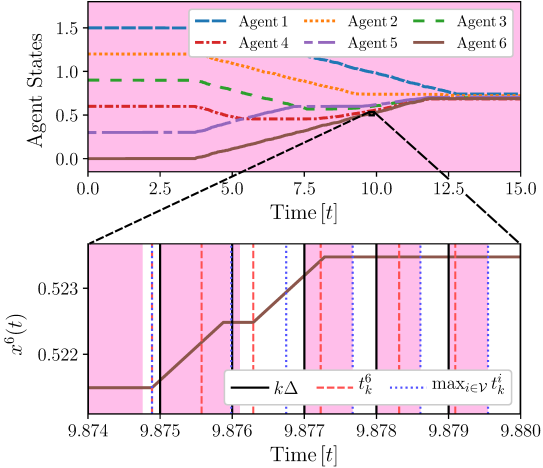}
\vskip -15pt
\par\end{centering}
\caption{Evolution of agent states under communication-aware attacks. }
\label{figure:agent-states-wiser-attack} 
\end{figure}

Even though the value of $\rho$ is the same with $\rho=0.8$ as in
the deterministic attacks case of the previous example, the communication-aware
attacks can be more malicious in the sense that they can delay consensus
(compare Figures~\ref{figure:agent-states-deterministic-attack}
and \ref{figure:agent-states-wiser-attack}). To further investigate
how $\rho$ and $\kappa$ affect the consensus time, we run simulations
with different values of $\rho$ and $\kappa$ but with the same communication
attempt times used for constructing Figure~\ref{figure:agent-states-wiser-attack}.
We observe in Figure~\ref{figure:rho-effect} that consensus time
$t_{\mathrm{C}}$ increases as $\rho$ increases. The scalar $\kappa\geq0$
also has an effect on the consensus time. In particular, increasing
$\kappa$ delays the consensus, since the duration for continuous
jamming becomes larger. 

\begin{figure}[t]
\begin{centering}
\includegraphics[width=0.9\columnwidth]{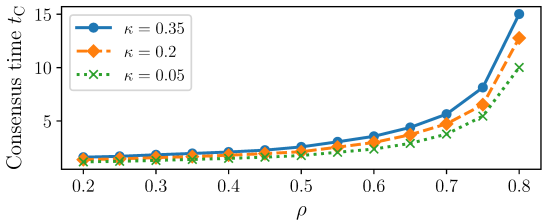}
\vskip -15pt
\par\end{centering}
\caption{Effect of the bound $\rho$ of average attack duration ratio to consensus
time $t_{\mathrm{C}}$ for different values of $\kappa$. }
\label{figure:rho-effect} 
\end{figure}

We remark that in communication-aware attacks, the attacker turns
jamming on and off once in every $\Delta$-length intervals. Hence,
the frequency of attacks is equal to the frequency of communication
attempts. This case is outside the class of attacks considered previously
in \citeasnoun{senejohnny_2015}. On the other hand, the class of
attacks under which our communication protocol allows consensus is
not restricted by the frequency of attacks. Specifically, as long
as the average ratio of the duration of attacks in the long run is
bounded by $\rho<1$, consensus can be achieved. 

\begin{rem} Although the randomized transmission approach expands
the class of attacks under which consensus can be achieved, the number
of communication attempts made by agents can be large compared to
that in the self-triggered communication schemes of \citeasnoun{de_persis_ternary_2013},
\citeasnoun{senejohnny_2015}, and \citeasnoun{senejohnny2017jamming}.
In the self-triggered communication schemes, each agent $i$ computes
its next communication time $t_{k+1}^{i}$ based on neighboring agents'
states at time $t_{k}^{i}$. If the states of neighboring agents are
far from agent $i$'s state, then $t_{k+1}^{i}$ takes a large value.
This approach reduces the number of transmissions and keeps the communication
costs low. In the randomized transmission scheme of this paper, the
number of communication attempts made by agent $i$ in a fixed interval
$[0,t]$ can be reduced by increasing $\Delta^{i}$ in Definition~\ref{Def:Communication}.
However, setting a large value to $\Delta^{i}$ results in slow convergence
towards the consensus set, since each agent $i$ applies control inputs
only for at most $T^{i}$ units of time between each communication
attempt. The approximate consensus time achieved with our control
approach can be larger than the time achieved with the approach of
above-mentioned works, since in our approach, control inputs of agents
may be set to zero between consecutive communication times. 

To reduce the expected total time where the control inputs have zero
value before reaching consensus, $T^{i}$ can be selected close to
$\Delta^{i}$. This also reduces the time to reach consensus. We observe
through simulations that with smaller $T^{i}$ (given by $T^{i}=\Delta^{i}/2.1$)
consensus is achieved at time $t_{\mathrm{C}}=22.92$. On the other
hand, when $T^{i}$ is larger ($T^{i}=\Delta^{i}/1.01$ as in the
setting of this section) consensus is achieved for the same attack
parameters at an earlier time $t_{\mathrm{C}}=12.79$ as shown in
Figure~\ref{figure:agent-states-wiser-attack}. 

We note that a new \emph{hybrid }self-triggered and randomized approach
can be useful in reducing the communication loads while protecting
against a large class of attacks and achieving faster consensus. In
the hybrid approach, the neighboring agents' states obtained at time
$t_{k}^{i}$ can be used by agent $i$ for choosing the length $\Delta_{k+1}^{i}$
of the interval in which the next transmission attempt time $t_{k+1}^{i}$
is randomly selected. \end{rem} 

\section{Conclusion \label{sec:Conclusion}}

We proposed a stochastic communication protocol for multi-agent consensus
under jamming attacks. In this protocol, agents attempt to exchange
information with their neighbors at uniformly distributed random time
instants. We showed that our proposed communication protocol guarantees
consensus as long as the jamming attacks satisfy a certain condition
on the average ratio of their duration. We demonstrated our results
both for a deterministic attack strategy and a communication-aware
attack strategy. 

The analysis in this paper enables a natural extension to the case
with multiple jamming attackers that can attack different links at
different times. In such a problem setting, if the deterministic or
the communication-aware attack for each communication link satisfies
Assumption~\ref{Assumption:duration_sync}, then two agents can communicate
over that link infinitely many times in the long run. This allows
agents to achieve consensus through a modified control law where each
agent can communicate with different neighbors at different times. 

\bibliographystyle{ifac}
\bibliography{references}

\appendix
\section{Proof of Lemma~\ref{Lemma:Positive-prob}}
\label{Appendix-Proof-Positive-prob}
\linespread{0.92}
\setlength{\jot}{2ex}       
\setlength\abovedisplayskip{3.6pt}       
\setlength\belowdisplayskip{3.6pt}

In the interval $[k\hat{\Delta}^{i},(k+1)\hat{\Delta})=[k\gamma\Delta,(k+1)\gamma\Delta)$,
agent $i$ attempts communication with its neighbors for $\gamma$
number of times at time instants $t_{k\gamma}^{i},t_{k\gamma+1}^{i},\ldots,t_{(k+1)\gamma-1}^{i}$.
It follows from the definition of $\hat{\varphi}_{k}^{i}$ in \eqref{eq:phi_hat_def}
and \eqref{eq:gamma-Delta-phi-def} that 
\begin{align}
\mathbb{P}[\hat{\varphi}_{k}^{i}=1|\mathcal{H}_{k-1}^{i}] & \geq\mathbb{P}[\varphi_{(k+1)\gamma-1}^{i}=1|\mathcal{H}_{k-1}^{i}],\label{eq:Key-ineq}
\end{align}
 where the right-hand side represents the conditional probability
of a successful communication at time $t_{(k+1)\gamma-1}^{i}$. Hence,
to prove \eqref{eq:Phi-last-prob-1} it suffices to show
\begin{equation}
\mathbb{P}[\varphi_{(k+1)\gamma-1}^{i}=1|\mathcal{H}_{k-1}^{i}]\geq2q^{\gamma},\quad k\in\mathbb{N}_{0}.\label{eq:last-prob}
\end{equation}

To show \eqref{eq:last-prob}, we first consider the case $\gamma=1$.
In this case, $\hat{\Delta}=\Delta$ and $|\overline{\mathcal{A}}(k\Delta,(k+1)\Delta)|=\tau_{k}<\underline{\Delta}$,
almost surely. Moreover, we have 
\begin{align}
 & \mathbb{P}[\varphi_{(k+1)\gamma-1}^{i}=1|\mathcal{H}_{k-1}^{i}]=\mathbb{P}[\varphi_{k}^{i}=1|\mathcal{H}_{k-1}^{i}]\nonumber \\
 & \quad=\mathbb{P}[t_{k}^{i}\notin\overline{\mathcal{A}}(k\Delta,(k+1)\Delta)|\mathcal{H}_{k-1}^{i}]\nonumber \\
 & \quad\geq\mathbb{P}[\{t_{k}^{i}\geq k\Delta+\underline{\Delta}\}\cap\{\tau_{k}<\underline{\Delta}\}|\mathcal{H}_{k-1}^{i}].\label{eq:FirstLowerBound}
\end{align}
Now, since $\mathbb{P}[\tau_{k}\geq\underline{\Delta}]=0$, we have
$\mathbb{P}[\tau_{k}\geq\underline{\Delta}|\mathcal{H}_{k-1}^{i}]=0$,
almost surely. As a result, $\mathbb{P}[\{t_{k}^{i}\geq k\Delta+\underline{\Delta}\}\cap\{\tau_{k}\geq\underline{\Delta}\}|\mathcal{H}_{k-1}^{i}]\leq\mathbb{P}[\tau_{k}\geq\underline{\Delta}|\mathcal{H}_{k-1}^{i}]=0$.
Hence, 
\begin{align}
 & \mathbb{P}[t_{k}^{i}\geq k\Delta+\underline{\Delta}|\mathcal{H}_{k-1}^{i}]\nonumber \\
 & \,\,=\mathbb{P}[\{t_{k}^{i}\geq k\Delta+\underline{\Delta}\}\cap\{\tau_{k}<\underline{\Delta}\}|\mathcal{H}_{k-1}^{i}]\nonumber \\
 & \,\,\quad+\mathbb{P}[\{t_{k}^{i}\geq k\Delta+\underline{\Delta}\}\cap\{\tau_{k}\geq\underline{\Delta}\}|\mathcal{H}_{k-1}^{i}]\nonumber \\
 & \,\,=\mathbb{P}[\{t_{k}^{i}\geq k\Delta+\underline{\Delta}\}\cap\{\tau_{k}<\underline{\Delta}\}|\mathcal{H}_{k-1}^{i}].\label{eq:ProbEquality}
\end{align}
 By using \eqref{eq:FirstLowerBound} and \eqref{eq:ProbEquality},
we obtain 
\begin{align}
\mathbb{P}[\varphi_{(k+1)\gamma-1}^{i}=1|\mathcal{H}_{k-1}^{i}] & \geq\mathbb{P}[t_{k}^{i}\geq k\Delta+\underline{\Delta}|\mathcal{H}_{k-1}^{i}].\label{eq:FirstLowerBound-1}
\end{align}
 Since, $t_{k}^{i}$ is independent of $\mathcal{H}_{k-1}^{i}$, we
have $\mathbb{P}[t_{k}^{i}\geq k\Delta+\underline{\Delta}|\mathcal{H}_{k-1}^{i}]=\mathbb{P}[t_{k}^{i}\geq k\Delta+\underline{\Delta}]$.
It then follows from \eqref{eq:FirstLowerBound-1} that 
\begin{align}
 & \mathbb{P}[\varphi_{(k+1)\gamma-1}^{i}=1|\mathcal{H}_{k-1}^{i}]\geq\mathbb{P}[t_{k}^{i}\geq k\Delta+\underline{\Delta}]=\frac{\Delta-\underline{\Delta}}{\Delta}\nonumber \\
 & \quad=\frac{\hat{\Delta}-\underline{\Delta}}{\Delta}\geq\frac{(1-\rho)(\hat{\Delta}-\underline{\Delta})}{\Delta}=\frac{2\widetilde{\Delta}}{\Delta}=2q,\label{eq:FirstLowerBound-1-1}
\end{align}
which shows that \eqref{eq:last-prob} holds when $\gamma=1$.

Now, consider the case $\gamma\geq2$. By noting $\widetilde{\Delta}<\Delta$,
we let
\begin{align*}
F_{k} & \triangleq\Big\{\omega\in\Omega\,\colon\,t_{k}^{i}\in\big[(k+1)\Delta-\widetilde{\Delta},(k+1)\Delta\big)\Big\},\,\,k\in\mathbb{N}.
\end{align*}
 Observe that $F_{k}\in\mathcal{F}$ denotes the event that the random
communication attempt time $t_{k}^{i}$ falls on the last $\widetilde{\Delta}$
units of time in the interval $[k\Delta,(k+1)\Delta)$. Consider the
interval $[k\hat{\Delta},(k+1)\hat{\Delta})$. Notice that the communication
attempts in this interval occur at time instants $t_{k\gamma}^{i},t_{k\gamma+1}^{i},\ldots,t_{(k+1)\gamma-1}^{i}$.
Let the events $G_{k}\in\mathcal{F}$, $k\in\mathbb{N}_{0}$, be given
by 
\begin{align}
G_{k} & \triangleq F_{k\gamma}\cap F_{k\gamma+1}\cap\cdots\cap F_{(k+1)\gamma-2},\quad k\in\mathbb{N}_{0}.\label{eq:Gk-def}
\end{align}
It follows that 
\begin{align}
 & \mathbb{P}[\varphi_{(k+1)\gamma-1}^{i}=1|\mathcal{H}_{k-1}^{i}]\nonumber \\
 & \quad=\mathbb{P}[\{\varphi_{(k+1)\gamma-1}^{i}=1\}\cap G_{k}|\mathcal{H}_{k-1}^{i}]\nonumber \\
 & \quad\quad+\mathbb{P}[\{\varphi_{(k+1)\gamma-1}^{i}=1\}\cap G_{k}^{\mathrm{c}}|\mathcal{H}_{k-1}^{i}]\nonumber \\
 & \quad\geq\mathbb{P}[\{\varphi_{(k+1)\gamma-1}^{i}=1\}\cap G_{k}|\mathcal{H}_{k-1}^{i}].\label{eq:probability-trick-ahmet}
\end{align}
In the remainder of the proof, we will show 
\begin{align}
\mathbb{P}[\{\varphi_{(k+1)\gamma-1}^{i}=1\}\cap G_{k}|\mathcal{H}_{k-1}^{i}] & \geq2q\mathbb{P}[G_{k}|\mathcal{H}_{k-1}^{i}]\label{eq:FinalIneq1}
\end{align}
 and 
\begin{align}
\mathbb{P}[G_{k}|\mathcal{H}_{k-1}^{i}] & =\mathbb{P}[G_{k}]=q^{\gamma-1}.\label{eq:FinalIneq2}
\end{align}
We will then use \eqref{eq:probability-trick-ahmet}--\eqref{eq:FinalIneq2}
to show \eqref{eq:last-prob}. 

To establish \eqref{eq:FinalIneq1}, we first simplify the presentation
and define the time instants $b_{k}\triangleq k\gamma\Delta$, $c_{k}\triangleq(k+1)\gamma\Delta-\Delta$,
and $d_{k}\triangleq(k+1)\gamma\Delta$, for $k\in\mathbb{N}_{0}$.
Observe that $[b_{k},c_{k})$ gives the union of the first $\gamma-1$
number of $\Delta$-length intervals in $[k\hat{\Delta},(k+1)\hat{\Delta})$,
and moreover, $[c_{k},d_{k})$ gives the last $\Delta$-length interval.
Hence, 
\begin{align}
 & \mathbb{P}[\{\varphi_{(k+1)\gamma-1}^{i}=1\}\cap G_{k}|\mathcal{H}_{k-1}^{i}]\nonumber \\
 & \quad=\mathbb{P}[\{t_{(k+1)\gamma-1}^{i}\notin\overline{\mathcal{A}}(c_{k},d_{k})\}\cap G_{k}|\mathcal{H}_{k-1}^{i}]\nonumber \\
 & \quad\geq\mathbb{P}[\{t_{(k+1)\gamma-1}^{i}>d_{k}-2\widetilde{\Delta}\}\nonumber \\
 & \quad\qquad\cap\{|\overline{\mathcal{A}}(c_{k},d_{k})|\leq\Delta-2\widetilde{\Delta}\}\cap G_{k}|\mathcal{H}_{k-1}^{i}]].\label{eq:t-more-than-first}
\end{align}
By noting that $[k\gamma\Delta,(k+1)\gamma\Delta)=[b_{k},d_{k})=[b_{k},c_{k})\cup[c_{k},d_{k})$,
we obtain $|\overline{\mathcal{A}}(c_{k},d_{k})|=|\overline{\mathcal{A}}(b_{k},d_{k})|-|\overline{\mathcal{A}}(b_{k},c_{k})|$.
It then follows from Assumption \ref{Assumption:duration_sync} that
\begin{align}
|\overline{\mathcal{A}}(c_{k},d_{k})| & \leq\kappa+\rho\hat{\Delta}-|\overline{\mathcal{A}}(b_{k},c_{k})|.\label{eq:assum-ineq}
\end{align}
Noting that $2\widetilde{\Delta}<\Delta$, we use \eqref{eq:assum-ineq}
to show that the events $\{|\overline{\mathcal{A}}(c_{k},d_{k})|\leq\Delta-2\widetilde{\Delta}\}\in\mathcal{F}$
and $\{|\overline{\mathcal{A}}(b_{k},c_{k})|\geq(\gamma-1)(\Delta-\widetilde{\Delta})\}\in\mathcal{F}$
satisfy
\begin{align}
 & \{|\overline{\mathcal{A}}(c_{k},d_{k})|\leq\Delta-2\widetilde{\Delta}\}\nonumber \\
 & \quad\supseteq\{\kappa+\rho\hat{\Delta}-|\overline{\mathcal{A}}(b_{k},c_{k})|\leq\Delta-2\widetilde{\Delta}\}\nonumber \\
 & \quad=\{|\overline{\mathcal{A}}(b_{k},c_{k})|\geq\kappa+\rho\hat{\Delta}-\Delta+2\widetilde{\Delta}\}\nonumber \\
 & \quad=\{|\overline{\mathcal{A}}(b_{k},c_{k})|\geq(\gamma-1)(\Delta-\widetilde{\Delta})\}.\label{eq:event-relation}
\end{align}
 As a consequence of \eqref{eq:t-more-than-first} and \eqref{eq:event-relation},
we obtain 
\begin{align*}
 & \mathbb{P}[\{\varphi_{(k+1)\gamma-1}^{i}=1\}\cap G_{k}|\mathcal{H}_{k-1}^{i}]\\
 & \quad\geq\mathbb{P}[\{t_{(k+1)\gamma-1}^{i}>d_{k}-2\widetilde{\Delta}\}\\
 & \quad\qquad\cap\{|\overline{\mathcal{A}}(b_{k},c_{k})|\geq(\gamma-1)(\Delta-\widetilde{\Delta})\}\cap G_{k}|\mathcal{H}_{k-1}^{i}].
\end{align*}
Thus, since $t_{(k+1)\gamma-1}^{i}$ is independent of $\overline{\mathcal{A}}(b_{k},c_{k})$,
$G_{k}$, and $\mathcal{H}_{k-1}^{i}$, we obtain
\begin{align}
 & \mathbb{P}[\{\varphi_{(k+1)\gamma-1}^{i}=1\}\cap G_{k}|\mathcal{H}_{k-1}^{i}]\nonumber \\
 & \quad\geq\mathbb{P}[t_{(k+1)\gamma-1}^{i}>d_{k}-2\widetilde{\Delta}]\nonumber \\
 & \quad\quad\cdot\mathbb{P}[\{|\overline{\mathcal{A}}(b_{k},c_{k})|\geq(\gamma-1)(\Delta-\widetilde{\Delta})\}\cap G_{k}|\mathcal{H}_{k-1}^{i}]\nonumber \\
 & \quad=2q\mathbb{P}[\{|\overline{\mathcal{A}}(b_{k},c_{k})|\geq(\gamma-1)(\Delta-\widetilde{\Delta})\}\cap G_{k}|\mathcal{H}_{k-1}^{i}].\label{eq:first-ineq}
\end{align}
Here, we have $G_{k}\subseteq\{\omega\in\Omega\colon|\overline{\mathcal{A}}(b_{k},c_{k})|\geq(\gamma-1)(\Delta-\widetilde{\Delta})\}$
and hence $\{|\overline{\mathcal{A}}(b_{k},c_{k})|\geq(\gamma-1)(\Delta-\widetilde{\Delta})\}\cap G_{k}=G_{k}$.
This is because, for every outcome $\omega^{*}\in G_{k}$, the first
$\gamma-1$ communication attempts of agent $i$ happen in the last
$\widetilde{\Delta}$ units of time in their respective intervals,
and thus by \eqref{eq:wiser-h-tau-1}, the total duration of the attacks
in the interval $\big[b_{k},c_{k}\big)$ is at least $(\gamma-1)(\Delta-\widetilde{\Delta})$
units of time, implying $\omega^{*}\in\{\omega\in\Omega\colon|\overline{\mathcal{A}}(b_{k},c_{k})|\geq(\gamma-1)(\Delta-\widetilde{\Delta})\}$.
Using $\{|\overline{\mathcal{A}}(b_{k},c_{k})|\geq(\gamma-1)(\Delta-\widetilde{\Delta})\}\cap G_{k}=G_{k},$
we obtain \eqref{eq:FinalIneq1} from \eqref{eq:first-ineq}. 

Next, we show \eqref{eq:FinalIneq2}. First of all, we note that $G_{k}$
is independent of $\mathcal{H}_{k-1}^{i}$. Therefore, 
\begin{align}
\mathbb{P}[G_{k}|\mathcal{H}_{k-1}^{i}] & =\mathbb{P}[G_{k}].\label{eq:pgequality}
\end{align}
To compute $\mathbb{P}[G_{k}]$, we note that $t_{k\gamma}^{i},t_{k\gamma+1}^{i},\ldots,t_{(k+1)\gamma-2}^{i}$
are independent, and thus, the events $F_{k\gamma}$, $F_{k\gamma+1}$,
$\cdots$, $F_{(k+1)\gamma-2}$ are also independent. As a result,
\begin{align}
 & \mathbb{P}[G_{k}]=\mathbb{P}[F_{k\gamma}\cap F_{k\gamma+1}\cap\cdots\cap F_{(k+1)\gamma-2}]\nonumber \\
 & \,\,=\mathbb{P}[F_{k\gamma}]\cdots\mathbb{P}[F_{(k+1)\gamma-2}]=\Bigl(\cfrac{\widetilde{\Delta}}{\Delta}\Bigr)^{\gamma-1}=q^{\gamma-1}.\label{eq:P-Gk}
\end{align}
 Hence, \eqref{eq:FinalIneq2} follows from \eqref{eq:pgequality}
and \eqref{eq:P-Gk}. Finally, we use \eqref{eq:probability-trick-ahmet}--\eqref{eq:FinalIneq2}
to obtain \eqref{eq:last-prob}, leading us to \eqref{eq:Phi-last-prob-1}.
\hfill $\square$

\section{Proof of Lemma~\ref{LemmaIntersection}}
\label{Appendix-Proof-Intersection}

We show \eqref{eq:IntersectionLemmaResult} by induction. First, we
consider the case where $N=1$. In this case, we obtain
\begin{align}
 & \mathbb{P}[\cap_{k=0}^{N-1}\{\hat{\varphi}_{k}^{i}=\overline{\varphi}_{k+1}\}]\nonumber \\
 & \,\,=\mathbb{P}[\hat{\varphi}_{0}^{i}=\overline{\varphi}_{1}]=\mathbb{P}[\hat{\varphi}_{0}^{i}=1]\overline{\varphi}_{1}+\mathbb{P}[\hat{\varphi}_{0}^{i}=0](1-\overline{\varphi}_{1})\nonumber \\
 & \,\,\leq\overline{\varphi}_{1}+\mathbb{P}[\hat{\varphi}_{0}^{i}=0](1-\overline{\varphi}_{1}).\label{eq:IntersectionLemmaFirstCase}
\end{align}
By Lemma~\ref{Lemma:Positive-prob}, we have $\mathbb{P}[\hat{\varphi}_{0}^{i}=0|\mathcal{H}_{-1}^{i}]=1-\mathbb{P}[\hat{\varphi}_{0}^{i}=1|\mathcal{H}_{-1}^{i}]\leq1-2q^{\gamma}$.
Hence, $\mathbb{P}[\hat{\varphi}_{0}^{i}=0]=\mathbb{E}[\mathbb{P}[\hat{\varphi}_{0}^{i}=0|\mathcal{H}_{-1}^{i}]]=\mathbb{E}[1-\mathbb{P}[\hat{\varphi}_{0}^{i}=1|\mathcal{H}_{-1}^{i}]]\leq\mathbb{E}[1-2q^{\gamma}]=1-2q^{\gamma}$.
As a result, $\overline{\varphi}_{1}+\mathbb{P}[\hat{\varphi}_{0}^{i}=0](1-\overline{\varphi}_{1})\leq\overline{\varphi}_{1}+(1-2q^{\gamma})(1-\overline{\varphi}_{1})=1-2q^{\gamma}(1-\overline{\varphi}_{1})$.
Thus, we have \eqref{eq:IntersectionLemmaResult} for $N=1$.

Next, assume \eqref{eq:IntersectionLemmaResult} holds for $N=\tilde{N}>2$,
that is 
\begin{align}
\mathbb{P}[\bigcap_{k=0}^{\tilde{N}-1}\{\hat{\varphi}_{k}^{i}=\overline{\varphi}_{k+1}\}] & \leq\prod_{j=1}^{\tilde{N}}\big(1-2q^{\gamma}(1-\overline{\varphi}_{j})\big).\label{eq:tildenassumption}
\end{align}
 We will show that \eqref{eq:IntersectionLemmaResult} holds for $N=\tilde{N}+1$.
Observe
\begin{align}
 & \mathbb{P}\big[\bigcap_{k=0}^{\tilde{N}}\{\hat{\varphi}_{k}^{i}=\overline{\varphi}_{k+1}\}\big]=\mathbb{E}\big[\prod_{k=0}^{\tilde{N}}\mathbbm{1}[\hat{\varphi}_{k}^{i}=\overline{\varphi}_{k+1}]\big]\nonumber \\
 & \quad=\mathbb{E}\big[\prod_{k=0}^{\tilde{N}}\mathbbm{1}[\hat{\varphi}_{k}^{i}=\overline{\varphi}_{k+1}]|\mathcal{H}_{\tilde{N}-1}^{i}\big].\label{eq:peeq}
\end{align}
Since the random variables $\hat{\varphi}_{k}^{i}$, $k\in\{0,\ldots,\tilde{N}-1\}$,
are $\mathcal{H}_{\tilde{N}-1}^{i}$-measurable, from \eqref{eq:peeq},
we obtain 
\begin{align}
 & \mathbb{P}\big[\bigcap_{k=0}^{\tilde{N}}\{\hat{\varphi}_{k}^{i}=\overline{\varphi}_{k+1}\}\big]=\mathbb{E}\Big[\big(\prod_{k=0}^{\tilde{N}-1}\mathbbm{1}[\hat{\varphi}_{k}^{i}=\overline{\varphi}_{k+1}]\big)\nonumber \\
 & \quad\cdot\mathbb{E}\big[\mathbbm{1}[\hat{\varphi}_{\tilde{N}}^{i}=\overline{\varphi}_{\tilde{N}+1}]|\mathcal{H}_{\tilde{N}-1}^{i}\big]\Big].\label{eq:peeq2}
\end{align}
By Lemma~\ref{Lemma:Positive-prob}, we have $\mathbb{P}[\hat{\varphi}_{\tilde{N}}^{i}=0|\mathcal{H}_{\tilde{N}-1}^{i}]=1-\mathbb{P}[\hat{\varphi}_{\tilde{N}}^{i}=1|\mathcal{H}_{-1}^{i}]\leq1-2q^{\gamma}$.
Thus, 
\begin{align}
 & \mathbb{E}\left[\mathbbm{1}\left[\hat{\varphi}_{\tilde{N}}^{i}=\overline{\varphi}_{\tilde{N}+1}\right]|\mathcal{H}_{\tilde{N}-1}^{i}\right]=\mathbb{P}[\hat{\varphi}_{\tilde{N}}^{i}=\overline{\varphi}_{\tilde{N}+1}|\mathcal{H}_{\tilde{N}-1}^{i}]\nonumber \\
 & \,\,=\mathbb{P}[\hat{\varphi}_{\tilde{N}}^{i}=1|\mathcal{H}_{\tilde{N}-1}^{i}]\overline{\varphi}_{\tilde{N}+1}\nonumber \\
 & \,\,\quad+\mathbb{P}[\hat{\varphi}_{\tilde{N}}^{i}=0|\mathcal{H}_{\tilde{N}-1}^{i}](1-\overline{\varphi}_{\tilde{N}+1})\nonumber \\
 & \,\,\leq\overline{\varphi}_{\tilde{N}+1}+(1-2q^{\gamma})(1-\overline{\varphi}_{\tilde{N}+1})\nonumber \\
 & \,\,=1-2q^{\gamma}(1-\overline{\varphi}_{\tilde{N}+1}).\label{eq:eineq}
\end{align}
 It then follows from \eqref{eq:peeq2} and \eqref{eq:eineq} that
\begin{align}
 & \mathbb{P}\big[\bigcap_{k=0}^{\tilde{N}}\{\hat{\varphi}_{k}^{i}=\overline{\varphi}_{k+1}\}\big]\nonumber \\
 & \,\,\leq\mathbb{E}\big[\big(\prod_{k=0}^{\tilde{N}-1}\mathbbm{1}\left[\hat{\varphi}_{k}^{i}=\overline{\varphi}_{k+1}\right]\big)\big(1-2q^{\gamma}(1-\overline{\varphi}_{\tilde{N}+1})\big)\big]\nonumber \\
 & \,\,=\mathbb{E}\big[\prod_{k=0}^{\tilde{N}-1}\mathbbm{1}\left[\hat{\varphi}_{k}^{i}=\overline{\varphi}_{k+1}\right]\big]\big(1-2q^{\gamma}(1-\overline{\varphi}_{\tilde{N}+1})\big)\nonumber \\
 & \,\,=\mathbb{P}\big[\bigcap_{k=0}^{\tilde{N}-1}\{\hat{\varphi}_{k}^{i}=\overline{\varphi}_{k+1}\}\big]\big(1-2q^{\gamma}(1-\overline{\varphi}_{\tilde{N}+1})\big).\label{eq:peineq}
\end{align}
 Finally, by using \eqref{eq:tildenassumption} and \eqref{eq:peineq},
we obtain \eqref{eq:IntersectionLemmaResult} with $N=\tilde{N}+1$,
which completes the proof. \hfill $\square$
\end{document}